\def\BibTeX{{\rm B\kern-.05em{\sc i\kern-.025em b}\kern-.08em
T\kern-.1667em\lower.7ex\hbox{E}\kern-.125emX}}
\newtheorem{theorem}{\textit{Theorem}}
\newtheorem{lemma}{{\textit{Lemma}}}
\newtheorem{proposition}{\textit{Proposition}}
\newtheorem{corollary}{\textit{Corollary}}
\newtheorem{definition}{\textit{Definition}}
\newtheorem{remark}{\textit{Remark}}
\begin{document}

\title{
Optimizing Information Freshness in Uplink Multiuser MIMO Networks with Partial Observations
}

\author{
Jingwei Liu, Qian Wang, and He (Henry) Chen 

\thanks{
This article was presented in part at WiOpt 2023 \cite{10349810}.

J. Liu, Q. Wang and H. Chen are with Department of Information Engineering, The Chinese University of Hong Kong, Hong Kong SAR, China 
(Email: \{lj020, qwang, he.chen\}@ie.cuhk.edu.hk).

}
}

\maketitle
\begin{abstract}
This paper investigates a multiuser scheduling problem within an uplink multiple-input multi-output (MIMO) status update network, consisting of a multi-antenna base station (BS) and multiple single-antenna devices. The presence of multiple antennas at the BS introduces spatial degrees-of-freedom, enabling concurrent transmission of status updates from multiple devices in each time slot. Our objective is to optimize network-wide information freshness, quantified by the age of information (AoI) metric, by determining how the BS can best schedule device transmissions, while taking into account the random arrival of status updates at the device side. 
To address this decision-making problem, we model it as a partially observable Markov decision process (POMDP) and establish that the evolution of belief states for different devices is independent. 
We also prove that feasible belief states can be described by finite-dimensional vectors.
Building on these observations, we develop a dynamic scheduling (DS) policy to solve the POMDP, and then derive an upper bound of its AoI performance, which is used to optimize the parameter configuration. 
To gain more design insights, we investigate a symmetric network, and put forth a fixed scheduling (FS) policy with lower computational complexity.
An action space reduction strategy is applied to further reduce the computational complexity of both DS and FS policies.
Our numerical results validate our analyses and indicate that the DS policy with the reduced action space performs almost identically to the original DS policy, and both outperform the baseline policies.

\begin{IEEEkeywords}
Age of information, multiuser scheduling, multiple-input multi-output, and partially observable Markov decision process.
\end{IEEEkeywords}

\end{abstract}
\section{Introduction}
\IEEEPARstart{T}{he} concept of the age of information (AoI) has been extensively investigated in recent years thanks to its capability of quantifying the information freshness in time-critical systems, see e.g.,
\cite{kosta2017age,sun2019age,5984917,6195689,8006703,8065840,8972306}
and references therein.
In particular, AoI refers to the time duration that has elapsed since the generation of the most recently received message at the destination.
Recent research on AoI has included the investigation of AoI-based scheduling problems in multi-user wireless networks, see e.g.,
\cite{8514816,8933047,9348022,9736576,9376717,8580701}.
In these works, efforts have been made on minimizing the time-average AoI of the systems by coordinating the scheduling sequence of packet transmissions of multiple users.
Most of these studies assumed one shared channel so that at most one transmission is allowed in each time slot to avoid any transmission collision. 
Nevertheless, the effectiveness of these single-user scheduling schemes reduces as the number of users goes large. This is because users may need to wait a long duration between two granted transmission chances. 


To address the aforementioned problem, a few attempts have been made to explore applying the multiple-input multiple-output (MIMO) technology to enhance the AoI performance of multi-user networks \cite{9757204,9238859}. 
Specifically, the MIMO technology deploys multiple antennas at the base station (BS) to introduce extra spatial degrees-of-freedom, which supports more than one transmission simultaneously. 
Furthermore, MIMO has been widely used in modern wireless systems, e.g., LTE and Wi-Fi \cite{lee2009mimo,5458368,6476877,6878430,10.5555/2616448.2616482,9076313}.
In \cite{9757204}, precoding and scheduling schemes were designed to optimize the AoI of a (downlink) MIMO broadcast system. 
Our previous work \cite{9238859} investigated the AoI-based scheduling problem in a wireless multiuser uplink network applying the MIMO technology. In these works, the ``generate-at-will'' model was assumed, where the BS can fully observe the states of the whole network when making scheduling decisions. 
As such, the scheduling problems can be characterized by Markov decision process (MDP).

Another traffic model that has been widely considered in AoI optimization works is ``stochastic arrival'', which corresponds to event-triggered status update generations\cite{8933047,8935400}.
We note that the scheduling problems in uplink networks with the ``stochastic arrival'' model differ significantly from those studied in \cite{9757204,9238859}, which focus on the ``generate-at-will'' model.
Specifically, 
due to the randomly arrived status updates at devices, 
the BS cannot have complete knowledge of the system times of status updates at devices unless the devices report new arrivals to the BS. 
Such a reporting procedure
could lead to considerable network overhead when the number of devices is much larger than that of antennas equipped at the BS.
Thus, it holds practical significance to develop scheduling policies for scenarios where the BS possesses incomplete information regarding the system times of status updates at the device side.
Specifically, the BS solely knows the system time of status update of a device when the device is scheduled and transmits successfully.
In this case, the BS needs to make scheduling decisions under partial observations, rendering the problem formulation as an MDP no longer feasible.
To our knowledge, the literature has not thoroughly investigated the design of scheduling policies aimed at optimizing the AoI in uplink multiuser MIMO networks with partial observations. 

As an initial step towards filling this gap, we explore the AoI-oriented scheduling problem in a multiuser MIMO network. 
In this network, multiple single-antenna devices aim to transmit their latest status updates to a multiple-antenna BS using a common wireless uplink channel. 
We consider that the arrivals of status update packets at the device side follow independent Bernoulli processes.
Further, the devices do not transmit notifications of new status updates to the BS to minimize network overhead.
In this context, the BS needs to decide the scheduling sequence with partial observations.
The main contributions of this paper are given as follows.
\begin{itemize}
    \item We formulate the scheduling problem as a partially observable Markov decision process (POMDP) since the BS only obtains incomplete knowledge of status update arrivals at devices.
    The conditional probability distribution of
    system times of the status updates at the device side and the instantaneous AoI at the BS are jointly modeled as belief states of the POMDP.
    We remark that the original form of the formulated POMDP suffers from the curse of dimensionality and the PSPACE-complete hardness \cite{papadimitriou1987complexity} when applying standard methods, due to its continuous and infinite-dimension belief space, as well as the entangled evolution of all devices' belief states.
    Thus, we are motivated to explore the proposed POMDP to seek a practically feasible approach.
    
    \item After a thorough analysis of the POMDP structure, we discover two insightful findings, each with rigorous proof: (1) the evolutions of belief states for different devices are mutually independent, and (2) feasible belief states can be fully characterized using three-dimensional vectors. 
    By noting these two important observations, the continuous spaces of the belief states can be reduced to discrete ones, enabling a significant simplification of the formulated POMDP.
    Such a simplification facilitates the design process for scheduling policies for the considered network and theoretical analyses of their performance.

    \item We devise a dynamic scheduling (DS) policy on the basis of the simplified POMDP, inspired by the Lyapunov optimization\cite{neely2010stochastic}.
    Under the DS policy, in each time slot, the BS schedules a group of devices to conduct transmissions that minimize a Lyapunov drift, defined as the expectation of the sum of weighted AoI increase in the next time slot, under the condition of the current belief states.
    Leveraging the simplification of belief states, we establish an upper bound for the expected weighted sum AoI performance of the DS policy, and further optimize the upper limit to guide the configuration of hyper-parameters in the Lyapunov drift.
    To gain more design insights, we then consider a symmetric network, where we can derive the closed-form expression of the minimum upper bound of the AoI performance of the DS policy for selecting the hyper-parameters.
    A close observation to the properties of the upper bound expression of the AoI performance of the DS policy under the symmetric network inspires us to put forth a low-complexity fixed scheduling (FS) policy that schedules an optimized fixed number of devices in each slot according to varying network settings.
    To further reduce computational complexity, we implement an action space reduction strategy for both the DS and FS policies by applying a heuristic approximation. 
    Simulation results demonstrate that the proposed DS and FS policies with the action space reduction perform comparably to the corresponding original policies. 
    Furthermore, the simulation results show that the performance gap between the DS and FS policies is negligible, and the DS policy outperforms baseline policies that do not use the statistical information of the system times of the status update packets at devices and that always schedule fixed numbers of devices under different network settings.
\end{itemize}


We note that this work has substantially extended its conference version \cite{10349810} through including a more in-depth analysis of the POMDP and the DS policy, and the design of the FS policy.
Specifically, we theoretically prove the feasibility of the belief simplification and derive the upper bound of the AoI performance of the DS policy to provide a performance guarantee. 
We further optimize the attained upper bound for configuring DS policy parameters. 
In the case of symmetric network settings, we derive a closed-form expression of the minimum upper bound, yielding additional insights into the AoI performance of the DS policy. 
In light of this, a low-complexity FS policy is proposed for the symmetric scenarios. Moreover, we present further numerical results to validate our analysis and conduct comparisons with more benchmark policies to affirm the effectiveness of our proposed policies.

\textbf{\textit{Notations:}} In this paper, $\mathbb{Z}^+$ denotes the set of non-negative integers, $\mathbb{E}[\cdot]$ denotes the operator of expectation, $[\cdot]$ denotes the representation of a vector containing the same type of elements, $\left\langle\cdot\right\rangle$ denotes a tuple containing different types of elements, and $\rVert\cdot\rVert_1$ denotes the $l_1$-norm of a vector.
$|\mathcal{S}|$ denotes the cardinality of a set $\mathcal{S}$.

\section{System Model and Problem Formulation}\label{System}

\subsection{Network Model}
We consider a multiuser MIMO uplink system that consists of a base station (BS) equipped with $M$ antennas and $N$ single-antenna devices, indexed by $i\!\in\!\{1,2,\!\cdots\!,N\}$, ($N\geq M$).
Thus, a maximum of $M$ devices can be authorized to transmit their status updates to the BS concurrently.
We assume the BS applies a zero-forcing decoder\footnote{Our framework can be readily extended to other decoders as the decoder only affects the probability of successful status updates.} thanks to its low complexity \cite{4299613}.
Time is slotted and indexed by $t\!\in\!\left\{1,2,\!\cdots\!,T\right\}$,\! where $T$ is the time horizon of the system.
We adopt a stochastic arrival model to capture the dynamics of status update packets at each device. Specifically, the status update arrival at device $i$ in each slot follows an independent and identically distributed (i.i.d.) Bernoulli process with an arrival rate $\lambda_i$.
Each device maintains a single buffer designated for the storage of the latest status update.
That is, the current status update in the buffer will be replaced once a new one arrives or discarded once it is successfully delivered to the BS and the associated device gets feedback from the BS immediately. 
The indicator of whether there is a status update packet stored in the buffer of device $i$ in slot $t$ is denoted by $I_{t,i}$, which equals $0$ if the buffer is empty in slot $t$, and equals~$1$,~otherwise.

The coordination of status update packet transmissions in the uplink is centrally managed by the BS.
At the beginning of each slot $t$, the BS decides to schedule which group of devices to transmit their latest status updates. 
The indicator denoting whether device $i$ is scheduled in slot $i$ is denoted by $a_{t,i}$, where $a_{t,i}=1$, if device $i$ is scheduled, and $a_{t,i}=0$, otherwise.
The set of scheduled devices is denoted by $\mathcal{K}_t\subseteq \{1,\cdots,N\}$.
In this paper, we term a scheduled device that has a status update in its buffer as an active device.
The set of active devices is denoted by $\mathcal{K}^a_t$.
Note that some scheduled devices may not be active because they may have empty buffers, hence we have $\mathcal{K}^a_t \subseteq \mathcal{K}_t$.
Let $K_t\triangleq |\mathcal{K}_t|$ and $K^a_t\triangleq|\mathcal{K}^a_t|$, respectively.
It follows naturally that $K^a_t\le K_t \le M$.

We assume that $M$ pre-defined channel training pilots are shared by all devices.
In each slot, each scheduled device is allocated one unique pilot sequence, which is transmitted as a preamble of the status update packet if the device is active.
We consider that this allows the BS to detect the activities of the scheduled devices and estimate the channel states of the transmitting devices perfectly for further signal decoding.
Note that the considered device scheduling and channel estimation mechanism aligns with the triggered uplink multiuser MIMO transmissions in practical IEEE 802.11ax WiFi systems \cite{80211ax2,80211ax3}.
We consider quasi-static Rayleigh fading channels, where the channel coefficients remain unchanged in each slot and change independently from one slot to another.
Assuming a symmetric network topology, by \cite{4299613}, in slot $t$, the transmission successful rate for each device in $\mathcal{K}^a_t$ is given~by
\begin{equation}\label{TSP}
    p(K^a_t)=\sum^{M-K^a_t}_{m=0}\frac{\left(\frac{\sigma^2}{P\Omega}\gamma_{th}\right)^m}{m!}\exp\left(-\frac{\sigma^2}{P\Omega}\gamma_{th}\right),
\end{equation}
for $K^a_t>0$, where $\sigma^2$ denotes the noise power at the receiving antennas, $P$ denotes the transmission power, $\Omega\triangleq L^{-\tau}$, in which $L$ denotes the distance from devices to the BS, $\tau$ is the path loss parameter, and $\gamma_{th}$ is the signal-to-noise ratio (SNR) threshold for correct decoding.
SNR is defined as $P/\sigma^2$.
Moreover, we specify $p(0)=0$.

\subsection{Information Freshness Metric and Problem Formulation}\label{SectionAoI}

We employ the AoI metric, initially introduced in \cite{5984917}, as a measure to assess the information freshness across all devices at the BS.
For mathematical AoI characterization, we initially establish the local age, denoted by $d_{t,i}$, representing the system time of the most recently arrived status update packet at device $i$ in slot $t$.
Therefore, the evolution of $d_{t,i}$ is given by
\begin{equation}\label{LA}
	d_{t+1,i}=\begin{cases}
		1, &\text{if status update arrives at }\\
		&\text{device}\ i\ \text{in slot}\ t,\\
		d_{t,i} +1, &\text{otherwise}.
	\end{cases}
\end{equation}
If device $i$ is active at the beginning of slot $t$ and experiences a successful transmission, the BS will record the current local age of device $i$.
In this context, the destination AoI of device $i$ in this slot, denoted by $D_{t,i}$, will be set to $d_{t,i}+1$ in slot $t+1$.
Otherwise, the AoI of device $i$ will increase by $1$ in slot $t+1$. 
Mathematically, the AoI evolution can be expressed as
\begin{equation}\label{eqAoI}
	  D_{t+1,i}
	  =\begin{cases}
		d_{t,i} +1, 
		&\text{if the status update of device $i$}\\
		&\text{is received by the BS in slot}\ t,\\
		D_{t,i} +1, &\text{otherwise}.
	\end{cases}
\end{equation}
For device $i$, according to the evolution of $d_{t,i}$ and $D_{t,i}$, we can conclude that 
$I_{t,i}=1$ if $D_{t,i}>d_{t,i}$, and $I_{t,i}=0$ if $D_{t,i}=d_{t,i}$.
For simplicity, we assume the initial values of both the local age and the destination AoI for each device to be $1$, i.e.,
$d_{0,i}=D_{0,i}=1,\forall i$.

To reduce network overhead, we consider that the status update arrivals at the device side are not reported to the BS at the beginning of each time slot. 
In this context, the BS exclusively registers the local ages of devices that are scheduled and execute successful transmissions.
In this way, the BS can fully observe the AoI $D_{t,i}$ and partially observe the local age $d_{t,i}$ for determining scheduling decisions.

In this study, we embrace the long-term expected weighted sum Age of Information (EWSAoI) as the performance metric, mathematically defined as $\frac{1}{NT}\mathbb{E}\left[\sum^T_{t=1}\sum^N_{i=1}\omega_iD_{t,i}\Big|\pi\right]$,
where $\omega_i\in(0,\infty)$ denotes the weight coefficient of device $i$, $\pi$ is a stationary scheduling policy, and the expectation is taken over all system dynamics.
Our goal is to develop a scheduling policy $\pi$ that minimizes the long-term EWSAoI.
Specifically, we have the following optimization problem
\begin{equation}\label{P}
    \begin{split}
    &\min_\pi\quad  \lim_{T\to \infty}\frac{1}{NT}\mathbb{E}\left[\sum^T_{t=1}\sum^N_{i=1}\omega_iD_{t,i}\Big|\pi\right],\\
    &\mbox{s.t.,}\quad 
    \sum^N_{i=1}a_{t,i}\le M,\forall t.
    \end{split}
\end{equation}
The information available at the BS for deciding the scheduled group of devices consists of the values of $\lambda_i$'s, $\omega_i$'s, the complete observations of the destination AoI $D_{t,i}$'s, and the partial observations of the local age $d_{t,i}$'s.


\section{POMDP Formulation and Analysis}\label{POMDP}

\subsection{POMDP Components}
The decision-making problem \eqref{P} with partial observations can be naturally modeled as a POMDP with the following components:

\underline{\textit{States:}} The state in slot $t$ is denoted by
$\bm{s}_t \triangleq \langle \bm{d}_t,\bm{D}_t\rangle$, where $\bm{d}_t\triangleq \left[ d_{t,1},d_{t,2},\dots,d_{t,N}\right] \in \bm{\mathcal{D}}$, $\bm{D}_t\triangleq \left[ D_{t,1},D_{t,2},\dots,D_{t,N}\right] \in \bm{\mathcal{D}}$, and $\bm{\mathcal{D}}\triangleq \left( \mathbb{Z}^+\right)^N$, respectively. 
	
\underline{\textit{Actions:}} The action in slot $t$ is denoted by $\bm{a}_t=[a_{t,i},a_{t,i},\cdots,a_{t,N}]\in \{0,1\}^N$.
Since the scheduling constraint, we have $\Vert\bm{a}_t\Vert_1=K_t\le M$.
In this context, $\mathcal{K}_t=\{i|a_{t,i}=1\}$, $\mathcal{K}^a_t=\{i|I_{t,i}a_{t,i}=1\}$, and $K^a_t=\sum^N_{i=1}I_{t,i}a_{t,i}$.
We denote the action space by $\bm{\mathcal{A}}=\{\bm{a}(1),\cdots,\bm{a}(N_a)\}$, where $N_a=\sum^M_{K=1}\binom{N}{K}$, and $\bm{a}(j)=[a_1(j),\cdots,a_N(j)]\in \{0,1\}^N$ is the $j$th feasible action.


\underline{\textit{Observations:}} 
We denote the network-wide observation of the state of the devices by $\bm{o}_t\triangleq \langle \bm{D}_t, \hat{\bm{d}}_t\rangle$, where $\hat{\bm{d}}_t=[ \hat{d}_{t,1},\dots,\hat{d}_{t,N} ]$ denotes the partial-observed local ages of all devices. 
We have $\hat{d}_{t,i}\in \mathbb{Z}^+ \bigcup \left\{X,A\right\}$,
where $X$ denotes the absence of local information obtained from device $i$ due to it not being scheduled in slot $t$, and $\hat{d}_{t,i}=A$ when device $i$ is scheduled but the transmission fails in slot $t$, implying that $d_{t,i}<D_{t,i}$.
	
\underline{\textit{Transition Function:}} 
Transition function $\Pr\left(\bm{s}_{t+1}|\bm{s}_t,\bm{a}_t\right)$ represents the probability of transitioning from state $\bm{s}_t$ to state $\bm{s}_{t+1}$ upon selecting action $\bm{a}_t$.
Note that the transitions of $\bm{D}_t$ and $\bm{d}_t$ are conditionally independent of each other. 
It becomes
\begin{equation}
	\Pr\left(\bm{s}_{t+1}|\bm{s}_t,\bm{a}_t \right)= \Pr\left( \bm{D}_{t+1}|\bm{s}_t,\bm{a}_t\right)\Pr\left( \bm{d}_{t+1}|\bm{d}_t\right), 
\end{equation}
where 
$\Pr\left( \bm{D}_{t+1}|\bm{s}_t,\bm{a}_t \right) =\prod_{i=1}^{N}\Pr\left(D_{t+1,i}|\bm{s}_t,\bm{a}_t \right)$.
In specific, 
\begin{equation}
\begin{split}  &\Pr(D_{t+1,i}|\bm{s}_t,\bm{a}_t)=\\
    &\begin{cases}
        p(K^a_t),& \text{if }I_{t,i}a_{t,i}=1,\sum^N_{i=1}I_{t,i}a_{t,i}=K^a_t,\\
        &D_{t+1,i}=d_{t,i}+1,\\
        1-p(K^a_t),& \text{if }I_{t,i}a_{t,i}=1,\sum^N_{i=1}I_{t,i}a_{t,i}=K^a_t,\\
        &D_{t+1,i}=D_{t,i}+1,\\
        1,& \text{if }I_{t,i}a_{t,i}=0,D_{t+1,i}=D_{t,i}+1,\\
        0,& \text{otherwises.}
    \end{cases}
\end{split}
\end{equation}
On the other hand, 
$\Pr\left( \bm{d}_{t+1}|\bm{d}_t \right) =\prod_{i=1}^{N}\Pr\left(d_{t+1,i}|d_{t,i}\right)$,
where
\begin{equation}\label{Trans_d}
	\Pr\left( d_{t+1,i}|d_{t,i}\right)=
	\begin{cases}
		\lambda_i, &\text{if}\ d_{t+1,i}=1,\\
		1-\lambda_i, &\text{if}\ d_{t+1,i}=d_{t,i}+1,\\
		0, & \text{otherwise.}
	\end{cases} 
\end{equation}

\underline{\textit{Observation Function:}}
Since $\bm{D}_t$ is fully observable at the BS, we
denote the observation function by $\Pr( \bm{\hat{d}}_t|\bm{s}_t,\bm{a}_t) $, defined as the probability of observing $\bm{\hat{d}}_t$ given state $\bm{s}_t$ and action $\bm{a}_t$.
Note that the evolution of $\hat{d}_{t,i}$ for different devices are conditionally independent of each other, thus
\begin{equation}\label{ob}
	  \Pr( \hat{\bm{d}}_t|\bm{s}_t,\bm{a}_t)
	  =\prod_{i=1}^{N}\Pr( \hat{d}_{t,i}|\bm{s}_t,\bm{a}_t).
\end{equation}
Specifically, 
\begin{equation}
\begin{split}
    &\Pr\left( \hat{d}_{t,i}|\bm{s}_t,\bm{a}_t\right)=\\
    &\begin{cases}
        p(K^a_t),& \text{if }I_{t,i}a_{t,i}=1,\sum^N_{i=1}I_{t,i}a_{t,i}=K^a_t,\\
        &\text{and }\hat{d}_{t,i}=d_{t,i},\\
        1-p(K^a_t),& \text{if }I_{t,i}a_{t,i}=1,\sum^N_{i=1}I_{t,i}a_{t,i}=K^a_t,\\
        &\text{and }\hat{d}_{t,i}=A,\\
        1,& \text{if }I_{t,i}=0,a_{t,i}=1,\hat{d}_{t,i}=D_{t,i}=d_{t,i},\\
        &\text{or }a_{t,i}=0,\hat{d}_{t,i}=X,\\
        0,& \text{otherwises.}
    \end{cases}
\end{split}
\end{equation}

\underline{\textit{Belief States:}} 
The network-wide belief state is defined as the current probability distribution over $\bm{\mathcal{D}}^2$ conditioned on the history $\bm{h}_t\triangleq\langle\bm{B}_1,\bm{a}_1,\bm{o}_1,\bm{a}_2,\dots,\bm{a}_{t-1},\bm{o}_{t-1}\rangle$ thus far.
Since $\bm{D}_t$ is deterministic given $\bm{h}_t$, we denote the network-wide belief state in slot $t$ by $\bm{B}_t\triangleq\left\langle \bm{D}_t,\bm{b}_t\right\rangle$,
where $\bm{b}_t\triangleq\left[ b_t(\bm{d}_t)\right]_{\bm{d}_t\in \bm{\mathcal{D}}}$ is the belief state of the network-wide local age in slot $t$ with $b_t\left( \bm{d}_t\right)\triangleq \Pr\left( \bm{d}_t| \bm{h}_t \right) $ denoting the probability assigned to $\bm{d}_t$.
Thus, we have $\rVert\bm{b}_t\rVert_1=1$.
Besides, we denote $\bm{\mathcal{B}}$ as the belief space, i.e., the collection of all possible $\bm{B}_t$.
$\bm{\mathcal{B}}$ is also called \textit{belief simplex} \cite{kaelbling1998planning}.


\underline{\textit{Belief Update:}} Once the last action $\bm{a}_t$ is executed, the BS updates $\bm{B}_{t+1}$ from $\bm{B}_t$ at the end of slot $t$ following the reception of new observations. 
Recall that $\bm{B}_t=\langle \bm{D}_t,\bm{b}_t\rangle$, both $\bm{D}_t$ and $\bm{b}_t$ need to be updated. Specifically, the $i$-th component of $\bm{D}_t$, can be updated by
	\begin{equation}\label{Du}
		\begin{split}
			D_{t+1,i} =
			\begin{cases}
				D_{t,i}+1, & \text{if}\ \hat{d}_{t,i} \notin \mathbb{Z}^+,\\
				\hat{d}_{t,i}+1, & \text{otherwise}.
			\end{cases}
		\end{split}
	\end{equation}
	Given $\bm{o}_t$, the update of $D_{t,i}$ is deterministic and independent from device to device.
    Moreover, by the Bayes’ theorem, $\bm{b}_{t+1}$ can be updated from $\bm{b}_t$ as
	\begin{equation}\label{bu1}
			b_{t+1}(\bm{d}_{t+1})
			= \rho \sum_{\bm{d}_{t}\in\bm{\mathcal{D}}} f(\bm{d}_{t+1},\bm{s}_t,\hat{\bm{d}}_t,\bm{a}_t),
	\end{equation}
	where
	\begin{equation}
	    f(\bm{d}_{t+1},\bm{s}_t,\hat{\bm{d}}_t,\bm{a}_t)\!=\!b_{t}\left( \bm{d}_{t}\right) \Pr\left( \bm{d}_{t+1}|\bm{d}_{t} \right)\Pr( \hat{\bm{d}}_{t}|\bm{s}_{t},\bm{a}_t),
	\end{equation}
	and $\rho=1/\sum_{\bm{d}_{t+1},\bm{d}_{t}\in\bm{\mathcal{D}}}f(\bm{d}_{t+1},\bm{s}_t,\hat{\bm{d}}_t,\bm{a}_t)$
 is the Bayes normalizing factor.

\underline{\textit{Reward:}}
We define the immediate reward in slot $t$ as $R\left( \bm{B}_t\right)\triangleq \sum^N_{i=1}\omega_i D_{t,i}$.
Then, the long-term EWSAoI can be rewritten as $\frac{1}{NT}\mathbb{E}[\sum^T_{t=1}R(\bm{B}_t)\big|\bm{B}_1,\pi]$,
where $\bm{B}_1$ is the given initial network-wide belief state.
	

We note that the POMDP formulation presented above generalizes the problem we studied in our previous work \cite{10093917}. In that work, we considered a single-antenna BS that could schedule up to one device in each time slot under partial observations. However, the policies proposed in \cite{10093917} cannot be readily employed to solve the POMDP in this work, which is more complex due to the scheduling of multiple devices in each time slot enabled by {MIMO} technology. The transmission success rates depend on $\bm{a}_t$ and $\bm{s}_t$, thereby implying that the independence between the belief updates of different devices is not self-evident. Thus, the computational complexity of the belief updates, which is $O(|\bm{\mathbb{Z}^+}|^{2N})$ according to \eqref{bu1}, is prohibitively high when $N$ is large. Moreover, the BS must consider that the scheduled devices may not be active, which was not taken into account in \cite{10093917}. In light of these complexities, we need to investigate the structures of the formulated POMDP to simplify the belief updates effectively.

\subsection{Independence of the Belief Updates of All Devices}
In the above POMDP formulation, we naturally define the belief state of the network-wide state and the corresponding belief update function given in \eqref{bu1}.
We now take a re-look at the belief state and the corresponding belief update function from the perspective of a single device.
Mathematically, the belief state of device $i$ in slot $t$ is denoted by $\bm{B}_{t,i} \triangleq \left\langle {D}_{t,i},\bm{b}_{t,i}\right\rangle$,
where ${D}_{t,i}$ is deterministic for a given history profile $\bm{h}_t$, and $\bm{b}_{t,i}\triangleq \left[ b_{t,i}(d_{t,i})\right]_{d_{t,i}\in \mathbb{Z}^+}$
denotes the belief state of the local age of device $i$. 
Specifically, $b_{t,i}\left( {d}_{t,i}\right)\triangleq \Pr\left(d_{t,i}|\bm{h}_t\right)$ denotes the probability assigned to ${d}_{t,i}$. 
The update of $\bm{b}{t+1,i}$ from $\bm{b}{t,i}$ can be proceeded using Bayes' theorem as 
\begin{equation}\label{bu21}
			b_{t+1,i}(d_{t+1,i})
			=\rho_i \sum_{\bm{d}_{t}\in\bm{\mathcal{D}}} f_i(d_{t+1,i},\bm{s}_t,\hat{d}_{t,i},\bm{a}_t),
	\end{equation}
where
\begin{equation}
\begin{split}
    &f_i(d_{t+1,i},\bm{s}_t,\hat{d}_{t,i},\bm{a}_t)\\
    &=b_{t,i}\left( d_{t,i}\right) \Pr\left( d_{t+1,i}|d_{t,i} \right)\Pr( \hat{d}_{t,i}|\bm{s}_{t},\bm{a}_t),
\end{split}
\end{equation}
and $\rho_i=
		1/\sum_{d_{t+1,i}\in \mathbb{Z}^+,\bm{d}_{t}\in\bm{\mathcal{D}}}f_i(d_{t+1,i},\bm{s}_t,\hat{d}_{t,i},\bm{a}_t)$
is the Bayes normalizing factor associated with device $i$.


By analyzing \eqref{bu1} and \eqref{bu21}, we have the following proposition:
\begin{proposition}\label{prop1}
Given an initial $\bm{b}_1$ such that $b_1(\bm{d}_1)=\prod^N_{i=1}b_{1,i}(d_{1,i})$, then we have
\begin{equation}\label{pp1}
    b_t(\bm{d}_t)=\prod^N_{i=1}b_{t,i}(d_{t,i}),\ \  \forall t\ge 1,\bm{b}_t\in \bm{\mathcal{D}}.
\end{equation}
In addition, \eqref{bu1} can be rewritten as
\begin{equation}\label{supb}
    b_{t+1}(\bm{d}_{t+1})=\prod^N_{i=1}\rho_i \sum_{\bm{d}_{t}\in\bm{\mathcal{D}}} f_i(d_{t+1,i},\bm{s}_t,\hat{d}_{t,i},\bm{a}_t).
\end{equation}
\end{proposition}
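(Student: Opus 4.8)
The plan is to prove \eqref{pp1} by induction on $t$ and then read off \eqref{supb} as an immediate consequence. The base case $t=1$ holds by the hypothesis $b_1(\bm{d}_1)=\prod_{i=1}^N b_{1,i}(d_{1,i})$. For the inductive step I would assume $b_t(\bm{d}_t)=\prod_{i=1}^N b_{t,i}(d_{t,i})$ and substitute it into the network belief update \eqref{bu1}, together with the two factorizations already established in the POMDP formulation, namely $\Pr(\bm{d}_{t+1}|\bm{d}_t)=\prod_{i=1}^N\Pr(d_{t+1,i}|d_{t,i})$ and $\Pr(\hat{\bm{d}}_t|\bm{s}_t,\bm{a}_t)=\prod_{i=1}^N\Pr(\hat{d}_{t,i}|\bm{s}_t,\bm{a}_t)$ from \eqref{ob}. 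This turns the summand of \eqref{bu1} into the product $\prod_{i=1}^N b_{t,i}(d_{t,i})\Pr(d_{t+1,i}|d_{t,i})\Pr(\hat{d}_{t,i}|\bm{s}_t,\bm{a}_t)$, so the whole computation reduces to interchanging the sum $\sum_{\bm{d}_t}$ with the product $\prod_i$.

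The main obstacle, and the reason the claim is not self-evident as noted after the reward definition, is that each observation factor $\Pr(\hat{d}_{t,i}|\bm{s}_t,\bm{a}_t)$ depends on the coupling quantity $K^a_t=\sum_{j}I_{t,j}a_{t,j}$ through the success rate $p(K^a_t)$, so a priori it is a function of the entire vector $\bm{d}_t$ rather than of $d_{t,i}$ alone, which blocks the sum-product interchange. The key lemma I would establish is that, once the realized observation $\hat{\bm{d}}_t$ is fixed, $K^a_t$ is constant across every $\bm{d}_t$ contributing a nonzero summand. This follows from a case analysis of the observation function: a scheduled device with $\hat{d}_{t,i}=A$, or with $\hat{d}_{t,i}\in\mathbb{Z}^+$ and $\hat{d}_{t,i}<D_{t,i}$, must be active ($I_{t,i}=1$); a scheduled device with $\hat{d}_{t,i}=D_{t,i}$ must be inactive ($I_{t,i}=0$, forcing $d_{t,i}=D_{t,i}$); and an unscheduled device ($\hat{d}_{t,i}=X$) does not enter $K^a_t$. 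Hence the active/inactive status of every scheduled device is unambiguously read off from $\hat{\bm{d}}_t$, so all $\bm{d}_t$ compatible with $\hat{\bm{d}}_t$ yield the same value $K^a_t=k^\ast$, while incompatible $\bm{d}_t$ make the product vanish on both sides. Consequently I can replace $\Pr(\hat{d}_{t,i}|\bm{s}_t,\bm{a}_t)$ by a function $g_i(d_{t,i})$ that depends on $d_{t,i}$ alone, with $p(k^\ast)$ now a fixed scalar, and this identity holds for every $\bm{d}_t$.

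With the observation function written as $\prod_i g_i(d_{t,i})$, every factor of the summand depends on $d_{t,i}$ alone, and the standard sum-product interchange gives $b_{t+1}(\bm{d}_{t+1})=\rho\prod_{i=1}^N\sum_{d_{t,i}}b_{t,i}(d_{t,i})\Pr(d_{t+1,i}|d_{t,i})g_i(d_{t,i})$. I would then recognize each inner sum as the single-device update \eqref{bu21} up to its normalizing constant, i.e.\ it equals $b_{t+1,i}(d_{t+1,i})/\rho_i$, which yields $b_{t+1}(\bm{d}_{t+1})=(\rho/\prod_i\rho_i)\prod_{i=1}^N b_{t+1,i}(d_{t+1,i})$. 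The final step is to show the prefactor equals one, i.e.\ $\rho=\prod_i\rho_i$: since $\bm{b}_{t+1}$ and each $\bm{b}_{t+1,i}$ are normalized and $\sum_{\bm{d}_{t+1}}\prod_i b_{t+1,i}(d_{t+1,i})=\prod_i\sum_{d_{t+1,i}}b_{t+1,i}(d_{t+1,i})=1$, summing the previous identity over $\bm{d}_{t+1}\in\bm{\mathcal{D}}$ forces $\rho/\prod_i\rho_i=1$. This closes the induction and proves \eqref{pp1}, and substituting $\rho=\prod_i\rho_i$ back into the factorized update is precisely the statement \eqref{supb}.
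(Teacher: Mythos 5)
Your proposal is correct and follows essentially the same route as the paper's Appendix~A: an induction on $t$ whose crux is that, restricted to the local-age vectors consistent with the realized observation $\hat{\bm{d}}_t$, each factor $\Pr(\hat{d}_{t,i}|\bm{s}_t,\bm{a}_t)$ no longer depends on $\bm{d}_{t,-i}$ (because $K^a_t$ is pinned down by the observation), which unlocks the sum--product interchange. The only cosmetic difference is bookkeeping: the paper packages the interchange as a lemma with counting constants $C(\tilde{\bm{\mathcal{D}}})$ that cancel between numerator and denominator, whereas you absorb the constant via a final normalization argument ($\sum_{\bm{d}_{t+1}}$ of both sides equals one), and you are somewhat more explicit than the paper about why the observation determines each scheduled device's activity status.
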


\begin{proof}
See Appendix \ref{appA}.
\end{proof}

\begin{remark}
{Proposition} \ref{prop1} indicates that the belief updates of the local age of all devices are mutually independent and thus can be calculated separately.
\end{remark}
Nevertheless, the dimension of the belief state of each device is still large, as the local age can be any arbitrary non-negative integer. Inspired by our previous work \cite{10093917}, we show in the next subsection that each belief state can be fully captured by a three-dimensional vector.
\subsection{Belief State Simplification}
To proceed, we have the following definition:
\begin{definition}
Suppose the BS schedules device $i$ in slot $t$ with observation $d_{t,i}=k_i$, and then abstains from scheduling device $i$ for the subsequent $m_i$ slots.
Let $\bm{c}(k_i,m_i)$ denote a local age belief state of device $i$ in slot $t+m_i$ with $k_i,m_i\in\mathbb{Z}^+$. Specifically, it signifies the belief of device $i$ based on the last observation $k_i$ followed by $m_i$ elapsed slots.
Subsequently, assume that the BS schedules device $i$ in slot $t+m_i$ but the transmission fails, and then the BS does not receive the packet successfully from device $i$ in the following $u_i$ slots.
Define the local age belief state of device $i$ in slot $t+m_i+u_i$ by $\bm{\theta}(k_i,m_i,u_i)$ for $u_i\in \mathbb{Z}^+$.
\end{definition}

For brevity, we omit the index $i$ and introduce the following proposition.

\begin{proposition}\label{propBS}
The local age belief state $\bm{c}(k,m)$ of device $i$ in slot $t+m$ can be expressed by a distribution vector, i.e.,
\begin{equation}\label{ckm}
    \bm{c}(k,m)\!=\!\left[\lambda,\lambda\gamma,\lambda\gamma^2,\!\cdots\!,\lambda\gamma^{m-1},0,\!\cdots\!,0,\gamma^m,0,\cdots\right],
\end{equation}
where $k,m\in\mathbb{Z}^+$, $\gamma\triangleq 1-\lambda$.
The position of entry $\gamma^m$ is $k+m$, and $D_{t,i}=k+m$.
Moreover, that of $\bm{\theta}(k,m,u)$ of device $i$ in slot $t+m+u$ is given by
\begin{equation}\label{ckmu}
\begin{split}
    \bm{\theta}(k,m,u)&=\left[\lambda,\lambda\gamma,\lambda\gamma^2,\cdots,\lambda\gamma^{u-1},\right.\\
    &\left.\frac{\lambda\gamma^u}{1-\gamma^m},\frac{\lambda\gamma^{u+1}}{1-\gamma^m},\cdots,\frac{\lambda\gamma^{u+m-1}}{1-\gamma^m},0,\cdots\right],
\end{split}
\end{equation}
where $u\in\mathbb{Z}^+$.
And we have $D_{t,i}=k+m+u$.
\end{proposition}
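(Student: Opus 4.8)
The plan is to invoke Proposition \ref{prop1} to reduce the whole argument to a single device (dropping the index $i$), so that the network-wide update \eqref{bu1} collapses to the per-device recursion \eqref{bu21}, in which the transmission-failure probability enters only through the age-independent factor $\Pr(\hat{d}_t = A\mid\cdot)=1-p(K^a_t)$. Everything then hinges on one dichotomy: in any slot where the BS gains no age-revealing information, the posterior over the local age evolves by the pure arrival transition \eqref{Trans_d}, while in the single slot where a failure $\hat{d}=A$ is observed a Bayes update is applied. I would first record the initial condition: immediately after the successful observation $d_t=k$ in slot $t$, the belief is the unit mass on coordinate $k$.

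For $\bm{c}(k,m)$ I would propagate this point mass through $m$ consecutive ``not scheduled'' slots, each contributing observation $X$ and hence, by \eqref{Trans_d}, no Bayes conditioning. The cleanest route is a direct ``time of the most recent arrival'' accounting: the local age at slot $t+m$ equals $j\in\{1,\dots,m\}$ exactly when the most recent arrival occurred $j$ slots ago (probability $\lambda\gamma^{j-1}$), and equals $k+m$ exactly when no arrival occurred during the $m$ slots (probability $\gamma^m$); every other coordinate is unreachable and hence zero. Since $k\ge 1$, the entry $\gamma^m$ sits at position $k+m>m$, which produces the block of zeros in \eqref{ckm}, and $\sum_{j=1}^m\lambda\gamma^{j-1}+\gamma^m=1$ confirms a valid distribution. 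The same bookkeeping gives $D=k+m$ from \eqref{eqAoI}, since the AoI is reset to $k+1$ at slot $t+1$ and then increments once per slot with no further reception.

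For $\bm{\theta}(k,m,u)$ I would first treat $u=0$. Observing $A$ in slot $t+m$ signals that the device was active, i.e.\ $d<D=k+m$; by \eqref{ckm} the only inactive state in the support is $d=k+m$, so the update \eqref{bu21} deletes that coordinate and renormalizes, with the common factor $1-p(K^a_{t+m})$ cancelling in $\rho_i$ because it does not depend on the age value. Using $\sum_{j=1}^m\lambda\gamma^{j-1}=1-\gamma^m$ this yields $\bm{\theta}(k,m,0)$. For $u\ge 1$ I would then argue by induction on $u$ that each additional ``no successful reception'' slot acts as a pure transition step: throughout the window the device stays active (the buffer is never emptied without a success, and $d\le m+u<k+m+u=D$), so $d=D$ never enters the support and an observation $A$ excludes nothing, while a not-scheduled slot gives $X$; in either case \eqref{bu21} degenerates to propagation by \eqref{Trans_d}. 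One transition step sends the current distribution to $\lambda$ at coordinate $1$ plus a $\gamma$-scaled, right-shifted copy of the remaining mass, which is exactly the increment from $\bm{\theta}(k,m,u)$ to $\bm{\theta}(k,m,u+1)$ in \eqref{ckmu}; with the base case above and $D=k+m+u$ following as before, the induction closes.

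The step I expect to be the main obstacle is justifying that the $u$ post-failure slots reduce to unconditioned propagation. This is the non-obvious claim, and it rests on two facts that must be stated carefully: the failure probability $1-p(K^a_t)$ depends on the number of active devices only and is therefore constant across the candidate age values of a given active device (so it cancels under $\rho_i$), and the device provably remains active for the whole window, which is what makes an observation $A$ genuinely uninformative about the value of $d$ rather than merely excluding the single state $d=D$. Establishing these two points, together with the appeal to Proposition \ref{prop1} to decouple the device from the rest of the network, is where the real content lies; the remaining geometric-series manipulations are routine.
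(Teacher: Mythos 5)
Your proposal is correct and follows essentially the same route as the paper's proof: a Bayes update at the failure slot that deletes the single inactive coordinate $d=k+m$ and renormalizes, followed by induction on $u$ in which each further slot reduces to pure propagation via \eqref{Trans_d}. In fact you are slightly more complete than the paper's write-up, since you explicitly justify why a scheduled-but-failed slot with $u\ge 1$ (observation $A$) is uninformative --- the support of $\bm{\theta}(k,m,u)$ lies strictly below $D=k+m+u$ because $k\ge 1$, and the likelihood $1-p(K^a_t)$ is constant over the support so it cancels in $\rho_i$ --- whereas the paper's inductive step only spells out the not-scheduled case.
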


\begin{proof}
See Appendix \ref{appB}.
\end{proof}

We now try to unify the two belief state expressions given in (\ref{ckm}) and (\ref{ckmu}). 
To that end, we extend the domain of the argument of $\bm{\theta}(k,m,u)$, i.e., $(k,m,u)$, from $(\mathbb{Z}^+)^3$ to $(\mathbb{Z}^+)^2\times\mathbb{N}$.
In particular, we define $\bm{\theta}(k,m,0)\triangleq \bm{c}(k,m)$ for the extended part of the argument domain.
By doing so, we can define a group of belief states that have the AoI equal to $k+m+u$ associated with $\bm{\theta}(k,m,u)$ as $\bm{\Theta}(k,m,u)\triangleq \left\langle k+m+u,\bm{\theta}(k,m,u) \right\rangle$ for $k,m\in\mathbb{Z}^+,u\in\mathbb{N}$.
Denote by $\tilde{\bm{\Theta}}$ the collection of all possible $\bm{\Theta}(k,m,u)$.

In slot $t$, define $k_{t,i}$ and $m_{t,i}$ as the last observation of local age of device $i$ and the number of slots device $i$ has been not scheduled since the last observation.
Additionally, in slot $t$, define $u_{t,i}$ as the number of slots that device $i$ has no successful transmission since the last transmission failure. 
Then, we have a corollary given as follows.
\begin{corollary}\label{CoroBE}
    Given that the network has a certain state, i.e., $d_{0,i}=D_{0,i}=1,\forall i$ before running, then $\bm{B}_{t,i}\in \tilde{\bm{\Theta}}$ for $t=1,2,\cdots,T,\forall i$.
\end{corollary}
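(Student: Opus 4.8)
The plan is to prove the claim by induction on the slot index $t$, and—thanks to Proposition~\ref{prop1}—to do so one device at a time: since the network-wide belief factorizes across devices and each factor is updated in isolation via \eqref{bu21}, it suffices to track the single-device pair $\bm{B}_{t,i}=\langle D_{t,i},\bm{b}_{t,i}\rangle$. A preliminary observation I would establish is that the rest of the network enters device $i$'s update only through the scalar success probability $p(K^a_t)$, and that this scalar does not affect the \emph{shape} of the posterior: on a successful transmission the observation $\hat{d}_{t,i}=d_{t,i}$ collapses the belief to a point mass independently of $p(K^a_t)$, while on a failed transmission (or an idle slot) $p(K^a_t)$ multiplies every surviving local age identically and therefore cancels in the Bayes normalizer $\rho_i$. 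Consequently the induction reduces to showing that $\tilde{\bm{\Theta}}$ contains the initial belief and is closed under each single-slot outcome available to device $i$.

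For the base case, the deterministic initial condition $d_{0,i}=D_{0,i}=1$ forces $\bm{b}_{0,i}$ to be the point mass at $1$, which by \eqref{ckm} is $\bm{c}(1,0)$, so $\bm{B}_{0,i}=\bm{\Theta}(1,0,0)\in\tilde{\bm{\Theta}}$ seeds the induction. For the inductive step I would assume $\bm{B}_{t,i}=\bm{\Theta}(k,m,u)$ and apply \eqref{Trans_d} together with the observation function and the update \eqref{bu21}, distinguishing the four mutually exclusive outcomes for device $i$ in slot $t$: (i) it is not scheduled; (ii) it is scheduled with an empty buffer, so $\hat{d}_{t,i}=D_{t,i}$; (iii) it is scheduled, active, and succeeds, so $\hat{d}_{t,i}=d_{t,i}=k'$; and (iv) it is scheduled, active, and fails, so $\hat{d}_{t,i}=A$. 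Invoking the closed forms of Proposition~\ref{propBS}, outcome (iii) resets the belief to $\bm{\Theta}(k',1,0)$ and outcome (ii) to $\bm{\Theta}(D_{t,i},1,0)$, both plainly in $\tilde{\bm{\Theta}}$. For the idle and failure outcomes I would verify by a one-step propagation of \eqref{Trans_d} that, when $u=0$, $\bm{c}(k,m)\mapsto\bm{c}(k,m+1)$ under (i) and $\bm{c}(k,m)\mapsto\bm{\theta}(k,m,1)$ under (iv), while for $u\ge1$ both (i) and (iv) give $\bm{\theta}(k,m,u)\mapsto\bm{\theta}(k,m,u+1)$. In each case I would also check the deterministic bookkeeping, confirming that $D_{t+1,i}$ equals the sum of the three new parameters, so that the whole pair $\bm{B}_{t+1,i}$ lands in $\tilde{\bm{\Theta}}$.

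The main obstacle—and the step that makes the three-parameter description work—is the failure outcome (iv). Here the Bayes update must first condition on the event that the device was active, i.e.\ $d_{t,i}<D_{t,i}$, which for $u=0$ removes the point mass $\gamma^m$ at position $k+m$ and renormalizes the truncated distribution by $1/(1-\gamma^m)$ before the one-slot Bernoulli propagation; checking that the result is exactly $\bm{\theta}(k,m,1)$ is the delicate computation. The crucial structural fact I would emphasize is that once $u\ge1$ the entire mass of $\bm{\theta}(k,m,u)$ already lies at positions at most $m+u<k+m+u=D_{t,i}$, so the device is active with probability one; a subsequent failure then reveals nothing new, the conditioning is vacuous, and outcomes (i) and (iv) collapse to the same map $u\mapsto u+1$. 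This is precisely why repeated failures need not be tracked separately and why three parameters suffice, so isolating and proving this active-with-probability-one property is the linchpin of the argument.
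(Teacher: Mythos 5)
Your proposal is correct and follows essentially the same route as the paper's Appendix proof: induction on $t$, reduced to a single device via Proposition~\ref{prop1}, with a case split over (not scheduled / scheduled-inactive / scheduled-success / scheduled-failure) and the closed forms of Proposition~\ref{propBS} verifying that each outcome maps $\tilde{\bm{\Theta}}$ into itself. Your explicit justification that for $u\ge 1$ the support of $\bm{\theta}(k,m,u)$ lies strictly below $D_{t,i}$, so the device is active with probability one and the inactive subcase cannot occur, is a point the paper's proof uses only implicitly, and making it explicit strengthens the argument.
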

\begin{proof}
    See Appendix \ref{appin_sertB1}.
\end{proof}
The evolution of $(k_{t,i},m_{t,i},u_{t,i})$ can thus be expressed as
\begin{equation}\label{kmuUp}
\begin{split}
    &(k_{t+1,i},m_{t+1,i},u_{t+1,i})=\\
    &\begin{cases}
        (k_{t,i},m_{t,i},u_{t,i}+1),&\text{if}\ u_{t,i}>0\ \text{and }\hat{d}_{t,i}\notin\mathbb{Z}^+\\
        (k_{t,i},m_{t,i},1),&\text{if  }u_{t,i}=0,a_{t,i}=1,\text{ and}\\
        &\hat{d}_{t,i}=A,\\
        (k_{t,i},m_{t,i}+1,u_{t,i}),& \text{if}\ u_{t,i}=0\text{ and }a_{t,i}=0,\\
        (d,1,0),& \text{if }a_{t,i}=1 \text{ and }\hat{d}_{t,i}=d,\\
        (k_{t,i}+m_{t,i},1,0),&\text{if }u_{t,i}=0,a_{t,i}=1, \text{ and }\\ 
        &I_{t,i}=0\text{ is detected.}
    \end{cases}
\end{split}
\end{equation}
\begin{remark}
Corollary \ref{CoroBE} shows that all belief states of the type of $\bm{\Theta}(k,m,u)$ are feasible for a device.
Therefore, the belief simplification given in (\ref{kmuUp}) reduces the continuous and infinite-dimension belief space $\bm{\mathcal{B}}$ to a discrete and three-dimensional $\tilde{\bm{\Theta}}$. 
As such, the complexity of belief updates is substantially reduced from $O(|\bm{\mathbb{Z}^+}|^{2N})$ to $O(3N)$.
Furthermore, the formulated POMDP can be reduced to an MDP.
Thus, the optimal policy of the formulated POMDP can be obtained by the dynamic programming (DP) method.
However, the DP method still suffers from the curse of dimensionality when $N$ goes large. We are motivated to devise a more effective solution, as presented in the next section.
\end{remark}

\section{Dynamic Scheduling Policy}\label{DS}
In this section, we develop a dynamic scheduling (DS) policy based on the Lyapunov optimization framework \cite{neely2010stochastic}.
Under this policy, the BS dynamically schedules varying numbers of devices in each time slot to minimize the expected increase of EWSAoI.
We then propound an action space reduction to further reduce the complexity of the DS policy.

\subsection{The Implementation of the DS policy}


Define a linear Lyapunov function $L(t)=\frac{1}{N}\sum^N_{i=1}\beta_iD_{t,i}$, where $\beta_i>0$ is a hyper-parameter that can be used to tune the DS policy to different network configurations.
The Lyapunov drift is defined as 
\begin{equation}\label{LD}
    \Delta(t)=\mathbb{E}[L(t+1)-L(t)|\bm{B}_t],
\end{equation}
representing the expected growth of $L(t)$ in one slot.
Thus, by minimizing $\Delta(t)$ in each slot $t$, the DS policy can keep a small increase of the Lyapunov function so that the EWSAoI can be reduced in the long term.

To develop the DS policy, we first show that
\begin{equation}\label{MO}
\begin{split}
    \Delta(t)=\frac{1}{N}\left[-\sum^N_{i=1}\frac{\beta_i\mu_{t,i}}{\phi_{t,i}}G_{t,i}+\sum^N_{i=1}\beta_i\right],
\end{split}
\end{equation}
where $\mu_{t,i}$ is the possibility that device $i$ has a successful transmission in slot $t$, $\phi_{t,i}=1-b_{t,i}(D_{t,i})$ is the probability $\Pr\{I_{t,i}=1|\bm{B}_{t,i}\}$, and $G_{t,i}\triangleq D_{t,i}-\mathbb{E}[d_{t,i}|\bm{B}_t]$ is the expected difference between the destination AoI and local age of device $i$ in slot $t$.
We note that the value of $\mu_{t,i}$ is determined by that of $a_{t,i}$.
Specifically, when $a_{t,i}=0$,  $\mu_{t,i}=0$. Moreover, when $a_{t,i}=1$,
\begin{equation}    \mu_{t,i}=\phi_{t,i}\sum^{\sigma(\mathcal{K}_t\backslash i)}_{j=1}p(\Vert\bm{\mathbf{z}}_j\Vert_1+1)Q(\bm{\mathbf{z}}_j),
\end{equation}
where $\sigma(\mathcal{K}_t\backslash i)$ denotes the number of all possible case that $I_{t,i'}=1$ or $0$ for all $i'\in\mathcal{K}_t\backslash i$, $\bm{\mathbf{z}}_j=[z_{j,l(1)},\cdots,z_{j,l(K_t-1)}]$ in which $l(r)$ denotes the $r$th device in $\mathcal{K}_t\backslash i$, $z_{j,l(r)}\in\{0,1\}$ represents the case $I_{t,l(r)}=z_{j,l(r)}$ for device $l(r)$, and
\begin{equation}\label{Qe}
    Q(\bm{\mathbf{z}}_j)=\prod^{K_t-1}_{r=1}[z_{j,l(r)}\phi_{t,l(r)}+(1-z_{j,l(r)})(1-\phi_{t,l(r)})].
\end{equation}
If $u_{t,i}=0$ (i.e., device $i$ has not been scheduled since the last successful transmission), we have
\begin{equation}\label{G1}
\begin{split}
    G_{t,i}&=k_{t,i}+m_{t,i}-\bm{\mathbf{n}}\left[\bm{\theta}(k_{t,i},m_{t,i},0)\right]^T\\
    &=m_{t,i}+\left[1-(1-\lambda_i)^{m_{t,i}}\right]\left(k_{t,i}-\frac{1}{\lambda_i}\right),
\end{split}
\end{equation}
where, $\bm{\mathbf{n}}\triangleq[1,2,3,\cdots]$.
Otherwise, we have
\begin{equation}\label{G2}
\begin{split}
    G_{t,i}&=k_{t,i}\!+\!m_{t,i}\!+\!u_{t,i}-\bm{\mathbf{n}}\left[\bm{\theta}(k_{t,i},m_{t,i},u_{t,i})\right]^T\\
    &=k_{t,i}\!+\!m_{t,i}\!+\!u_{t,i}\!-\!\frac{1}{\lambda_i}+\frac{m_{t,i}(1-\lambda_i)^{m_{t,i}+u_{t,i}}}{1-(1-\lambda_i)^{m_{t,i}}}.
\end{split}
\end{equation} 

Recall that each feasible $\mathcal{K}_t$ corresponds to one unique $\bm{a}_t\in\bm{\mathcal{A}}$. The DS policy aims to find $\bm{a}^*_t$, i.e., $\mathcal{K}^*_t$, in each slot $t$ to minimize \eqref{MO} given $\bm{B}_t$.
In this sense, we have
\begin{equation}\label{setDS}
    \pi(\bm{B}_t)=\mathcal{K}^*_t=\arg \min_{\mathcal{K}_t\subseteq \{1,\cdots,N\},K_t\le M}\Delta(t).
\end{equation}

\subsection{Upper Bound}
In this subsection, we first derive the upper bound of the EWSAoI of the DS policy to characterize its performance. 
We then analyze the characteristics of the derived upper bound to provide guidance on the configuration of the hyper-parameters $\{\beta_i\}^N_{i=1}$.

By exploring the structure of the formulated POMDP with the belief state simplification, we derive an upper bound for the DS policy, which is given in the following theorem.
\begin{theorem}\label{theUB}
    The EWSAoI performance of the DS policy, denoted by $J^{DS}$, is upper bounded by 
    \begin{equation}\label{UBex}
        J^{DS}\le \frac{1}{N}\sum^N_{i=1}\frac{\omega_i}{\lambda_i}\left(\frac{1}{\psi_i}+\frac{1}{\lambda_i}\right),
    \end{equation}
    with $\beta_i=\omega_i/\psi_i\lambda_i$, where 
    \begin{equation}\label{psi}
        \psi_i=\sum^{N_a}_{j=1}a_i(j)\xi(j)p(\rVert \bm{a}(j)\rVert_1),
    \end{equation}
     and $\xi(j)\ge 0,\forall j$, $\sum^{N_a}_{j=1}\xi(j)=1$.
\end{theorem}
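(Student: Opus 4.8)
The plan is to combine the per-slot drift minimization of the DS policy with a comparison against a stationary randomized benchmark, in the standard Lyapunov ``drift-plus-telescoping'' style. The starting point is the drift decomposition \eqref{MO}, together with the fact that at the current belief $\bm{B}_t$ the DS policy selects the action that minimizes the drift. Hence, writing $\Delta_j(t)$ for the drift obtained by applying action $\bm{a}(j)$ at $\bm{B}_t$, for any probability vector $\{\xi(j)\}_{j=1}^{N_a}$ with $\xi(j)\ge 0$ and $\sum_j \xi(j)=1$ we have $\Delta^{DS}(t)\le \sum_{j=1}^{N_a}\xi(j)\Delta_j(t)$, simply because a minimum is no larger than any convex average. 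This is a pointwise (same-belief) comparison, so no trajectory coupling is needed.

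First I would lower bound the effective service rate of each action. For action $\bm{a}(j)$ with $a_i(j)=1$, the factor $\mu_{t,i}/\phi_{t,i}$ equals $\sum_l p(\Vert\bm{z}_l\Vert_1+1)Q(\bm{z}_l)$; since $\Vert\bm{z}_l\Vert_1+1\le \Vert\bm{a}(j)\Vert_1$, the success rate $p(\cdot)$ is non-increasing in the number of active devices, and $\sum_l Q(\bm{z}_l)=1$, this sum is at least $p(\Vert\bm{a}(j)\Vert_1)$. Thus $\mu_{t,i}/\phi_{t,i}\ge a_i(j)\,p(\Vert\bm{a}(j)\Vert_1)$ for all $i,j$. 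Because $G_{t,i}=D_{t,i}-\mathbb{E}[d_{t,i}|\bm{B}_t]\ge 0$ (the local age never exceeds the destination AoI), substituting this lower bound into \eqref{MO}, averaging with weights $\xi(j)$, and recognizing $\psi_i$ from \eqref{psi} yields
\[
\Delta^{DS}(t)\le \frac{1}{N}\left[-\sum_{i=1}^N \beta_i\psi_i G_{t,i}+\sum_{i=1}^N\beta_i\right].
\]
Choosing $\beta_i=\omega_i/(\psi_i\lambda_i)$ makes $\beta_i\psi_i=\omega_i/\lambda_i$, so the bracket involves only $G_{t,i}$ and the constants $\omega_i/(\psi_i\lambda_i)$.

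Next I would telescope. Taking expectations and summing \eqref{LD} over $t=1,\dots,T$ gives $\mathbb{E}[L(T+1)]-L(1)=\sum_t \mathbb{E}[\Delta^{DS}(t)]$; combining with the per-slot bound and discarding the non-negative terminal term $\mathbb{E}[L(T+1)]\ge 0$ produces
\[
\frac{1}{NT}\sum_{t=1}^T\sum_{i=1}^N \frac{\omega_i}{\lambda_i}\,\mathbb{E}[G_{t,i}]\le \frac{1}{N}\sum_{i=1}^N\frac{\omega_i}{\psi_i\lambda_i}+\frac{L(1)}{T}.
\]
The final ingredient converts $G_{t,i}$ back to $D_{t,i}$. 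Since the local-age chain \eqref{Trans_d} evolves independently of the scheduling actions and starts from $d_{0,i}=1$, the recursion $\mathbb{E}[d_{t+1,i}]=1+(1-\lambda_i)\mathbb{E}[d_{t,i}]$ shows $\mathbb{E}[d_{t,i}]\le 1/\lambda_i$ for all $t$; by the tower property $\mathbb{E}[\mathbb{E}[d_{t,i}|\bm{B}_t]]=\mathbb{E}[d_{t,i}]\le 1/\lambda_i$, so $\mathbb{E}[G_{t,i}]\ge \mathbb{E}[D_{t,i}]-1/\lambda_i$. Substituting, letting $T\to\infty$ so that $L(1)/T\to 0$, and finally using $\lambda_i\le 1$ (hence $\omega_i\le\omega_i/\lambda_i$) together with $D_{t,i}\ge 0$ to pass from $\tfrac{\omega_i}{\lambda_i}D_{t,i}$ down to $\omega_i D_{t,i}$ delivers the stated bound.

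I expect the main obstacle to be getting the service-rate comparison right in the correct direction: the minus sign in \eqref{MO} means that to \emph{upper} bound the drift I must \emph{lower} bound $\mu_{t,i}/\phi_{t,i}$, which is exactly where the monotonicity of $p(\cdot)$, the normalization $\sum_l Q(\bm{z}_l)=1$, and the sign condition $G_{t,i}\ge 0$ must all be invoked together. The other delicate step is the conceptual one in the last paragraph: the drift naturally controls the over-weighted quantity $\tfrac{\omega_i}{\lambda_i}D_{t,i}$ rather than $\omega_i D_{t,i}$, and it is precisely the slack from $\lambda_i\le 1$ (and the policy-independence of the local-age marginal) that makes the weaker target bound valid; everything else is routine telescoping and algebra.
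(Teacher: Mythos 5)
Your proof is correct, and it follows the same skeleton as the paper's: bound the DS drift by the $\xi$-weighted average of the drifts of the stationary randomized actions, recognize $\psi_i$, telescope, and set $\beta_i=\omega_i/(\psi_i\lambda_i)$. Where you genuinely diverge is in converting the $G_{t,i}$ term into a statement about $D_{t,i}$. The paper does this \emph{pointwise in the belief state}: it invokes the explicit forms \eqref{G1}--\eqref{G2} from Proposition~\ref{propBS}, splits into the cases $u_{t,i}=0$ and $u_{t,i}>0$, and shows $-G_{t,i}\le \max\{\lambda_i-\lambda_iD_{t,i},\,\tfrac{1}{\lambda_i}-D_{t,i}\}\le\tfrac{1}{\lambda_i}-\lambda_iD_{t,i}$, so that the telescoped sum directly produces $\omega_iD_{t,i}$. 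You instead work \emph{in expectation}: you use the fact that the marginal law of $d_{t,i}$ is policy-independent, so the recursion $\mathbb{E}[d_{t+1,i}]=1+(1-\lambda_i)\mathbb{E}[d_{t,i}]$ with $d_{0,i}=1$ gives $\mathbb{E}[d_{t,i}]\le 1/\lambda_i$ and hence $\mathbb{E}[G_{t,i}]\ge\mathbb{E}[D_{t,i}]-1/\lambda_i$, then relax $\omega_i\le\omega_i/\lambda_i$ at the very end. Your route buys two things: it bypasses the belief-state case analysis entirely (it never needs the closed form of $\bm{\theta}(k,m,u)$, only $d_{t,i}\le D_{t,i}$ and the arrival recursion), and it actually establishes the stronger intermediate bound $\lim_T\frac{1}{NT}\sum_{t,i}\tfrac{\omega_i}{\lambda_i}\mathbb{E}[D_{t,i}]\le\frac{1}{N}\sum_i\tfrac{\omega_i}{\lambda_i}(\tfrac{1}{\psi_i}+\tfrac{1}{\lambda_i})$, from which the theorem follows since $\lambda_i\le 1$. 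You also supply the justification for the service-rate step that the paper only asserts in \eqref{LDUB}, namely $\mu_{t,i}/\phi_{t,i}\ge a_i(j)\,p(\rVert\bm{a}(j)\rVert_1)$ via the monotonicity of $p(\cdot)$, $\sum_l Q(\bm{z}_l)=1$, and $G_{t,i}\ge 0$; note that the paper's own max-inequality likewise relies on $\lambda_i\le 1$, so neither argument is more general on that point.
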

\begin{proof}
    See Appendix \ref{ProofUB}.
\end{proof}
Note that $\xi(j)$ can be regarded as a probability associated with action $a(j)$.
Thus, the upper bound \eqref{UBex} corresponds to the performance of a stationary random scheduling scheme that selects $a(j)$ with probability $\xi(j)$ in each slot.
Additionally, an admissible series of $\{\xi(j)\}^{N_a}_{j=1}$ associates with a series of hyper-parameters $\{\beta_i\}^N_{i=1}$.
Following the Lyapunov optimization framework, we aim to set the values of $\{\beta_i\}^N_{i=1}$ for the DS policy in such a way that the associated $\{\xi(j)\}^{N_a}_{j=1}$ minimize the upper bound \eqref{UBex}.
This can be done by solving the following problem:
\begin{equation}\label{op}
    \begin{split}
    \min_{\bm{\xi}}\quad & \sum^N_{i=1}\frac{\omega_i}{\lambda_i \bm{\theta}^T_i \bm{\xi}},\\
    \mbox{s.t.,}\quad &
    \bm{1}^T\bm{\xi}=1,\\
    & \bm{\xi}\ge \bm{0},
    \end{split}
\end{equation}
where $\bm{\theta}_i=[a_i(1)p(\rVert \bm{a}(1)\rVert_1);\cdots;a_i(N_a)p(\rVert \bm{a}(N_a)\rVert_1)]$,
$\bm{\xi}=[\xi(1);\cdots;\xi(N_a)]$, and $\bm{1}$ denotes an admissible all-one vector.
To proceed, we prove the convex property of the problem \eqref{op} given in the following theorem.
\begin{theorem}\label{thOPConvex}
    Given $\omega_i$, $\lambda_i$, and $\bm{\theta}_i$ based on the considered network, problem \eqref{op} is a convex problem.
\end{theorem}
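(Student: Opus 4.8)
The plan is to verify the two defining ingredients of a convex optimization problem separately: the convexity of the feasible region, and the convexity of the objective function over that region.

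First I would observe that the constraint set in \eqref{op} is the probability simplex $\{\bm{\xi}\in\mathbb{R}^{N_a}:\ \bm{1}^T\bm{\xi}=1,\ \bm{\xi}\ge\bm{0}\}$, which is the intersection of an affine hyperplane with the nonnegative orthant. Both are convex, so their intersection is convex, and the feasibility side is settled immediately.

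The substantive part is the objective. I would write it as $\sum_{i=1}^N f_i(\bm{\xi})$ with $f_i(\bm{\xi})=\frac{\omega_i}{\lambda_i}\cdot\frac{1}{\bm{\theta}_i^T\bm{\xi}}$, and show each $f_i$ is convex; since a sum of convex functions is convex, this yields convexity of the whole objective. The cleanest route is the composition rule: the scalar map $g(x)=1/x$ is convex on $x>0$, and $h_i(\bm{\xi})=\bm{\theta}_i^T\bm{\xi}$ is affine, hence the composition $g\circ h_i$ is convex on the (convex) set where $h_i(\bm{\xi})>0$; multiplying by the positive constant $\omega_i/\lambda_i$ preserves convexity. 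Alternatively, I could argue through the Hessian directly: $\nabla^2 f_i(\bm{\xi})=\frac{2\omega_i}{\lambda_i}\frac{\bm{\theta}_i\bm{\theta}_i^T}{(\bm{\theta}_i^T\bm{\xi})^3}$, a positive scalar times the rank-one positive-semidefinite matrix $\bm{\theta}_i\bm{\theta}_i^T$, so $\nabla^2 f_i\succeq 0$, whence $\nabla^2\big(\sum_i f_i\big)\succeq 0$.

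The one point requiring care — and the main obstacle — is the well-definedness of the reciprocal terms, i.e. ensuring $\bm{\theta}_i^T\bm{\xi}>0$ on the effective domain. Here I would use that every entry of $\bm{\theta}_i$ equals $a_i(j)\,p(\|\bm{a}(j)\|_1)\ge 0$ and that $\bm{\xi}\ge\bm{0}$, so $\bm{\theta}_i^T\bm{\xi}\ge 0$ always; the objective is finite precisely on the subset where $\bm{\theta}_i^T\bm{\xi}>0$ for every $i$, which is exactly where each $f_i$ and its Hessian are defined, and this subset is itself convex (a finite intersection of open half-spaces with the simplex). Restricting attention to this natural domain of the minimization, the composition/Hessian argument applies verbatim, so we are minimizing a convex objective over a convex set, establishing that \eqref{op} is a convex problem.
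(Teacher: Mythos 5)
Your proposal is correct and follows the same overall structure as the paper's proof: both arguments note that the feasible region is the probability simplex (hence convex), decompose the objective into the terms $\frac{\omega_i}{\lambda_i}\cdot\frac{1}{\bm{\theta}_i^T\bm{\xi}}$, establish convexity of each term, and conclude by summing with positive coefficients. The only real difference is the mechanism used for the key step: the paper verifies the convexity inequality $g_i(\alpha\bm{\xi}_1+(1-\alpha)\bm{\xi}_2)\le \alpha g_i(\bm{\xi}_1)+(1-\alpha)g_i(\bm{\xi}_2)$ from scratch, by starting from $\alpha(1-\alpha)(\bm{\theta}_i^T\bm{\xi}_1-\bm{\theta}_i^T\bm{\xi}_2)^2\ge 0$ and manipulating it into a product form, whereas you invoke the standard composition rule (the convex decreasing-curvature map $x\mapsto 1/x$ on $x>0$ precomposed with the affine map $\bm{\xi}\mapsto\bm{\theta}_i^T\bm{\xi}$), with the rank-one Hessian $\frac{2\omega_i}{\lambda_i}\frac{\bm{\theta}_i\bm{\theta}_i^T}{(\bm{\theta}_i^T\bm{\xi})^3}\succeq 0$ as an alternative. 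Your route is shorter and cites a textbook fact; the paper's is self-contained and elementary. You also handle the degenerate case $\bm{\theta}_i^T\bm{\xi}=0$ by restricting to the convex subdomain where all $\bm{\theta}_i^T\bm{\xi}>0$, while the paper handles it equivalently by extending $g_i$ to $+\infty$ there; both conventions preserve convexity, so there is no gap.
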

\begin{proof}
    See Appendix \ref{opConvexP}.
\end{proof}
Therefore, we can efficiently solve problem \eqref{op} by various convex programming techniques \cite{boyd2004convex} to obtain the optimal solution $\bm{\xi}^*$. 
Then, substituting $\bm{\xi}^*$ into \eqref{psi}, we obtain the selected hyper-parameters $\{\beta^*_i\}^N_{i=1}$,
applied to the DS policy.

\subsection{Action Space Reduction}\label{ASR}
We can find $\mathcal{K}^*_t$ given $\bm{B}_t$ for the DS and FS policies by exhausting all possible $\mathcal{K}_t$.
However, the computational complexities are still high when $N$ is large. 
We thus seek an action space reduction for the DS policy. 
Intuitively, it would be preferable to schedule device $i$ if $\beta^*_iG_{t,i}$ is relatively large.
This is because $D_{t,i}$ is likely to be larger than $d_{t,i}$ (i.e., device $i$ is probably active when scheduled) and the transmission of its status update can achieve a larger EWSAoI decrease comparing with other devices. 
As such, we make an approximation that $\phi_{t,i}=1$ for $i\in\mathcal{K}_t$, and thus $\mu_{t,i}$ are equal for $i\in\mathcal{K}_t$  by \eqref{TSP}.
Under this approximation, if we fix the scheduling number as $K$, it becomes apparent that scheduling devices with $K$ largest values in $\{\beta^*_{i}G_{t,i}\}^N_{i=1}$ minimizes \eqref{MO}.
Mathematically, we denote this subset of devices by $\hat{\mathcal{K}}_t(K)\triangleq\{s(1),\cdots,s(K)\}$, where $s(j)$ denotes the index of a device with its $\beta^*_{s(j)}G_{t,s(j)}$ being the $j$th largest value in $\{\beta^*_{i}G_{t,i}\}^N_{i=1}$.
Based on this approximation, we perform an action space reduction that only $\hat{\mathcal{K}}_t(K)$ with size $K\le M$ are exhausted by the DS policy to find $\hat{\mathcal{K}}^*_t$ to minimize \eqref{MO} given $\bm{B}_t$. 
Specifically, the DS with the action space reduction, denoted by $\hat{\pi}$, can be expressed as
\begin{equation}
    \hat{\pi}(\bm{B}_t)=\hat{\mathcal{K}}^*_t=\arg \min_{\hat{\mathcal{K}}_t(K),K\le M}\Delta(t).
\end{equation}
\begin{remark}
    Compared with the original DS policy, the numbers of actions that need to be traversed under the DS policy applying the action space reduction are reduced from $\sum^M_{K=1} \binom{N}{K}$ to $M$, and thus can be implemented with lower complexity.
    On this basis, we can optimize the AoI performance of the considered system more effectively.
    However, the minimization of the upper bound and the configuration of the hype-parameters $\{\beta_i\}^N_{i=1}$ for the DS policy need to be done by numerical methods, providing limited design insights. 
    To gain more insights, we investigate the policy design of a simpler symmetric network in the next section. 
\end{remark}

\section{Policies for Symmetric Networks}
In this section, our emphasis lies in the policy design of symmetric networks, characterized by uniform packet arrival rates and weight coefficients across all devices, to gain more design insights.

\subsection{The Upper Bound in the Symmetric Network}

By further exploration of problem \eqref{op}, we can obtain the optimal solution to the problem in closed form for the symmetric network with $\omega_i=\omega$, and $\lambda_i=\lambda,\forall i$.
The detailed solution is presented in the following theorem.
\begin{theorem}\label{TheoOS}
    Consider the symmetric networks where $\omega_i=\omega$, and $\lambda_i=\lambda,\forall i$.
    Categorize $\{\xi(1),\cdots,\xi(N_a)\}$ into $M$ sets, denoted by $\bm{\Xi}_1,\cdots,\bm{\Xi}_M$, respectively.
    Specifically, 
    \begin{equation*}
        \bm{\Xi}_n\triangleq\{\xi(j)|\rVert\bm{a}(j)\rVert_1=n\}.
    \end{equation*}
    Let $\widetilde{\bm{\Xi}}$ denote the collection of admissible $\bm{\xi}$ such that $\xi(j)=\xi^{(n)}\in[0,1]$ for all $\xi(j)\in\bm{\Xi}_n$, and $n=1,2,\cdots,N$.

    Then, a solution $\bm{\xi}^*$ is optimal for problem \eqref{op} if $\bm{\xi}^*\in\widetilde{\bm{\Xi}}$, $\xi^{(n^*)}>0$, where $n^*=\arg\max_n np(n)$, and $\xi^{(n')}=0$ for all $n'\neq n^*$.
    And the closed-form upper bound is given by
    \begin{equation}\label{UBCFSym}
        J^{DS}\le \frac{\omega}{\lambda}\left(\frac{N}{n^* p(n^*)}+\frac{1}{\lambda}\right).
    \end{equation}
\end{theorem}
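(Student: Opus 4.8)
The plan is to exploit the permutation symmetry of the symmetric network together with the convexity of problem \eqref{op} established in Theorem \ref{thOPConvex}. First I would set $\omega_i=\omega$ and $\lambda_i=\lambda$ so that the objective of \eqref{op} collapses to $\frac{\omega}{\lambda}\sum_{i=1}^N 1/(\bm{\theta}^T_i\bm{\xi})=\frac{\omega}{\lambda}\sum_{i=1}^N 1/\psi_i$, and the bound of Theorem \ref{theUB} reduces to $\frac{\omega}{\lambda}\bigl(1/\psi+1/\lambda\bigr)$ whenever all the $\psi_i$ share a common value $\psi$. Minimizing the objective thus amounts to making $\psi$ as large as possible, so the whole task reduces to a maximization of $\psi$ over the feasible simplex.

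The first key step is to argue that an optimal $\bm{\xi}$ may be taken inside the reduced set $\widetilde{\bm{\Xi}}$, i.e.\ to be constant on each group $\bm{\Xi}_n$. To this end I would let an arbitrary permutation of the $N$ devices act on the action index set. Since $\lambda$, $\omega$, $p(\cdot)$ and the constraints of \eqref{op} are all invariant under relabelling devices in the symmetric network, such a permutation maps any feasible point to a feasible point with the same objective value, and hence maps minimizers to minimizers. Averaging a given minimizer over the full symmetric group of device permutations then produces a feasible point (the feasible set is convex) whose objective is at most the average of the optimal values (the objective is convex by Theorem \ref{thOPConvex}), hence equal to the optimum; this averaged point is therefore itself a minimizer and is invariant under all device permutations. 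Because any two actions $\bm{a}(j),\bm{a}(j')$ with $\Vert\bm{a}(j)\Vert_1=\Vert\bm{a}(j')\Vert_1$ are related by such a permutation, the invariant minimizer must assign a common value $\xi^{(n)}$ to every $\xi(j)\in\bm{\Xi}_n$, i.e.\ it lies in $\widetilde{\bm{\Xi}}$. It therefore suffices to solve \eqref{op} restricted to $\widetilde{\bm{\Xi}}$.

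The second step is a clean reduction of this restricted problem to a scalar linear program. For $\bm{\xi}\in\widetilde{\bm{\Xi}}$ I would count that a fixed device belongs to exactly $\binom{N-1}{n-1}$ actions of size $n$, giving $\psi_i=\sum_{n=1}^M \xi^{(n)}p(n)\binom{N-1}{n-1}$ independently of $i$, while the normalization $\bm{1}^T\bm{\xi}=1$ becomes $\sum_{n=1}^M \xi^{(n)}\binom{N}{n}=1$ since $|\bm{\Xi}_n|=\binom{N}{n}$. Introducing $y_n\triangleq \xi^{(n)}\binom{N}{n}\ge 0$ with $\sum_n y_n=1$ and using the identity $\binom{N-1}{n-1}/\binom{N}{n}=n/N$, the quantity to be maximized becomes $\psi=\frac{1}{N}\sum_{n=1}^M y_n\, n\, p(n)$. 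This is a linear functional on the probability simplex in $\{y_n\}$, so its maximum is attained at a vertex, namely $y_{n^*}=1$ with $n^*=\arg\max_n n\,p(n)$ and $y_{n'}=0$ otherwise. Translating back gives exactly the asserted structure $\xi^{(n^*)}>0$, $\xi^{(n')}=0$ for all $n'\neq n^*$, together with the maximal value $\psi=n^*p(n^*)/N$; substituting $1/\psi=N/(n^*p(n^*))$ into $\frac{\omega}{\lambda}\bigl(1/\psi+1/\lambda\bigr)$ yields the closed form \eqref{UBCFSym}.

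The step I expect to be the main obstacle is the symmetrization argument: one must verify carefully that device permutations act on the $N_a$-dimensional vector $\bm{\xi}$ so as to leave \eqref{op} genuinely invariant, and then invoke convexity to pass from the mere existence of a minimizer to the existence of a symmetric minimizer lying in $\widetilde{\bm{\Xi}}$; the remaining combinatorial and algebraic manipulations are routine. As an alternative that sidesteps symmetrization entirely, one could verify the KKT conditions directly: at the candidate one has $\psi_i=n^*p(n^*)/N$ for every $i$, and with $n_j\triangleq\Vert\bm{a}(j)\Vert_1$ a short computation gives the partial derivative of $\sum_i \frac{\omega}{\lambda\psi_i}$ with respect to $\xi(j)$ equal to $-\frac{\omega}{\lambda\psi^2}\,n_j p(n_j)$. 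Choosing the equality multiplier $\nu=\frac{\omega}{\lambda\psi^2}n^*p(n^*)$ and the inequality multipliers $\mu_j=\frac{\omega}{\lambda\psi^2}\bigl(n^*p(n^*)-n_jp(n_j)\bigr)\ge 0$ then satisfies stationarity and complementary slackness, and convexity (Theorem \ref{thOPConvex}) certifies global optimality.
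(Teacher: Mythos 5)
Your proposal is correct, but your primary route differs genuinely from the paper's. The paper proves the theorem by writing down the Lagrangian of problem \eqref{op}, using an AM--GM observation only as motivation for restricting attention to $\widetilde{\bm{\Xi}}$, and then arguing via the KKT conditions: it shows by contradiction that any optimum with $\bm{\xi}\in\widetilde{\bm{\Xi}}$ must have $\xi^{(n^*)}>0$ and $\xi^{(n')}=0$ for $n'\neq n^*$, and finally exhibits explicit multipliers $\eta=\frac{\omega}{\lambda}\,n^*p(n^*)/\bigl(\sum_n\binom{N-1}{n-1}\xi^{(n)}p(n)\bigr)^2$ and $\zeta_j\ge 0$ certifying global optimality through the convexity established in Theorem \ref{thOPConvex}. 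Your main argument instead (i) justifies the restriction to $\widetilde{\bm{\Xi}}$ rigorously by averaging a minimizer over the device-permutation group and invoking convexity of the objective and of the feasible set --- a step the paper only motivates heuristically via AM--GM --- and (ii) reduces the restricted problem, through the substitution $y_n=\xi^{(n)}\binom{N}{n}$ and the identity $\binom{N-1}{n-1}/\binom{N}{n}=n/N$, to maximizing the linear functional $\frac{1}{N}\sum_n y_n\,n\,p(n)$ over the probability simplex, whose maximum sits at the vertex $n^*=\arg\max_n np(n)$. This buys a cleaner and more conceptual derivation that dispenses with multiplier bookkeeping, and it makes the restriction to $\widetilde{\bm{\Xi}}$ a proved fact rather than a verified guess; the paper's KKT route, which you also sketch correctly as your alternative (your multipliers $\nu$ and $\mu_j$ coincide with the paper's $\eta$ and $\zeta_j$), has the advantage of directly certifying optimality of the candidate point in the full $N_a$-dimensional problem without needing the symmetrization lemma. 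Your computations ($\psi_i=\sum_n\binom{N-1}{n-1}\xi^{(n)}p(n)$, the normalization $\sum_n\binom{N}{n}\xi^{(n)}=1$, and the resulting $\psi=n^*p(n^*)/N$ substituted into the bound of Theorem \ref{theUB}) all check out and reproduce \eqref{UBCFSym}.
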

\begin{proof}
    See Appendix \ref{OSProof}.
\end{proof}

\begin{remark}
    We note that the term $n^*p(n^*)$ represents the largest expected number of status update packets that the BS can receive simultaneously.
    That is, by Theorem \ref{TheoOS}, the found upper bound for the EWSAoI of a symmetric network under the DS policy is inversely proportional to the maximum throughput of the network, and is proportional to the number of devices given $\omega$ and $\lambda$.
    Moreover, since $\bm{\xi}^*\in \widetilde{\bm{\Xi}}$, we can set $\beta^*_1=\beta^*_2=\cdots=\beta^*_N>0$ to implement the DS policy when the network is symmetric.
\end{remark}

\subsection{Fixed Scheduling Policy}
By Theorem \ref{TheoOS}, the DS policy that achieves the minimum upper bound for symmetric network configurations corresponds to a scheduling scheme randomly serving an optimized fixed number of $n^*$ devices. 
Note that the value of $n^*$ is solely determined by the network configuration and requires only a single calculation for any particular network setting. 
In light of this, we develop the FS policy, denoted by $\pi_f$, for the symmetric systems by simplifying the DS policy.
Specifically, given a symmetric network with $N$, $M$, $P$, SNR, $\beta_1=\beta_2=\cdots=\beta_N>0$, and $\gamma_{th}$, the FS policy selects the $n^*$-element subset $\mathcal{K}^*_{f,t}$ of $\{1,2,\cdots,N\}$, whose associated Lyapunov drift is smaller than that of any other $n^*$-element subsets, in slot $t$.
Mathematically, we have
\begin{equation}\label{setFS}
    \pi_f(\bm{B}_t)=\mathcal{K}^*_{f,t}=\arg \min_{\mathcal{K}_t\subseteq \{1,\cdots,N\},K_t= n^*}\Delta(t).
\end{equation}

Compared with the DS policy that traverses $\sum^M_{K=1} \binom{N}{K}$ different actions, the FS policy only needs to select from $\binom{N}{n^*}$ different actions, thereby benefiting from a lower computational complexity.
Furthermore, we have the following theorem.
\begin{theorem}
    \label{TheoOSF}
    Consider the scenario given in the Theorem \ref{TheoOS} with $\beta_1=\beta_2=\cdots=\beta_N>0$. 
    The EWSAoI performance of the FS policy, denoted by $J^{FS}$, is upper bounded by
    \begin{equation}\label{UBCFFSSym}
        J^{FS}\le \frac{\omega}{\lambda}\left(\frac{N}{n^* p(n^*)}+\frac{1}{\lambda}\right).
    \end{equation}
\end{theorem}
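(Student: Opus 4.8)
The plan is to mirror the Lyapunov-drift argument that established Theorem \ref{theUB} and Theorem \ref{TheoOS}, exploiting the structural fact proved there that the drift-minimizing randomized policy is supported entirely on actions of size $n^*$. Recall that the drift decomposition \eqref{MO}, together with the closed-form expressions \eqref{G1}--\eqref{G2}, holds for \emph{any} stationary policy, and that the bound for the min-drift (DS) policy was obtained by comparing its per-slot drift against that of a stationary randomized policy which draws action $\bm{a}(j)$ with probability $\xi(j)$, and then telescoping over the horizon. I would reuse exactly this comparison, but restrict the competing randomized policy to the one identified as optimal in Theorem \ref{TheoOS}.

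First I would observe that, by Theorem \ref{TheoOS}, the optimal $\bm{\xi}^*$ for the symmetric network places all of its mass on actions $\bm{a}(j)$ with $\Vert\bm{a}(j)\Vert_1=n^*$; equivalently, $\xi(j)=\xi^{(n^*)}$ for every $n^*$-element subset and $\xi(j)=0$ otherwise. Hence the randomized policy attaining the closed-form bound \eqref{UBCFSym} schedules exactly $n^*$ devices in every slot. Writing $\Delta_{\bm{a}(j)}(t)$ for the per-slot drift under action $\bm{a}(j)$ at belief state $\bm{B}_t$, this policy has drift $\Delta^{\mathrm{rand}}(t)=\sum_{j:\Vert\bm{a}(j)\Vert_1=n^*}\xi^{(n^*)}\Delta_{\bm{a}(j)}(t)$. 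Next, since the FS policy defined in \eqref{setFS} selects, among \emph{all} $n^*$-element subsets, the one of smallest drift, and a minimum over a set never exceeds a convex combination of its elements, I would deduce the pointwise inequality $\Delta^{\pi_f}(t)=\min_{K_t=n^*}\Delta(t)\le\Delta^{\mathrm{rand}}(t)$ at every $\bm{B}_t$. Because $\Delta^{\mathrm{rand}}(t)$ is precisely the drift that the proof of Theorem \ref{TheoOS} bounds to obtain \eqref{UBCFSym}, substituting this inequality into the same telescoping-and-averaging step (summing $\Delta^{\pi_f}(t)$ over $t=1,\dots,T$, cancelling the Lyapunov-function boundary terms, dividing by $T$, and letting $T\to\infty$) yields the identical right-hand side $\frac{\omega}{\lambda}\left(\frac{N}{n^*p(n^*)}+\frac{1}{\lambda}\right)$.

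The main obstacle is making the comparison $\Delta^{\pi_f}(t)\le\Delta^{\mathrm{rand}}(t)$ rigorous along the entire trajectory rather than in a single slot. Concretely, one must verify that the belief states visited under $\pi_f$ still lie in the simplified space $\tilde{\bm{\Theta}}$ (which is guaranteed by Corollary \ref{CoroBE}, since that result is policy-independent), so that the drift formula \eqref{MO} and the closed-form terms \eqref{G1}--\eqref{G2} remain valid throughout; and that the randomized competitor is an admissible stationary policy using only $n^*$-element actions, so that it is feasible for the restricted minimization in \eqref{setFS}. Once these two points are checked, the result follows with no new computation, as every quantitative step is inherited verbatim from the proof of Theorem \ref{TheoOS}.
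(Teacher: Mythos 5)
Your proposal is correct and follows essentially the same route as the paper: the paper likewise bounds the FS policy's per-slot drift by that of a stationary randomized scheme supported uniformly on $n^*$-element actions (so that $\psi_i=n^*p(n^*)/N$ for all $i$), then reuses the telescoping argument from Theorem \ref{theUB} with $\beta=\omega N/(\lambda n^* p(n^*))$ to arrive at \eqref{UBCFFSSym}. Your additional remarks on the policy-independence of Corollary \ref{CoroBE} and the admissibility of the randomized competitor are consistent with, and slightly more explicit than, the paper's treatment.
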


\begin{proof}
    See Appendix \ref{OSFProof}.
\end{proof}

\begin{remark}
    Theorem \ref{TheoOS} and \ref{TheoOSF} demonstrate that the EWSAoI performances of the DS and FS policies share the same upper bound in the same symmetric network.
    Additionally, by \eqref{UBCFSym} and \eqref{UBCFFSSym}, our results suggest that the EWSAoI performances of both policies exhibit a positive correlation with the number of devices and a negative correlation with the maximum throughput of the network given $\omega$ and $\lambda$.
\end{remark}
Besides, the action space reduction proposed in Section \ref{ASR} can also be applied in the FS policy. 
Specifically, the FS policy with the action space reduction, denoted by $\hat{\pi}_f$, only selects $\hat{\mathcal{K}}_t(n^*)$ in slot $t$, which can be expressed as
\begin{equation}
    \hat{\pi}_f(\bm{B}_t)=\hat{\mathcal{K}}^*_{f,t}=\hat{\mathcal{K}}_t(n^*).
\end{equation}

Compared with the original FS policy, the number of actions requiring traversal under action space reduction decreases from $\binom{N}{n^*}$ to $1$.
The effectiveness of the proposed FS policy is assessed in the subsequent section.

\section{Numerical Results}\label{NR}
In this section, we first provide some numerical results to verify our propositions of belief states defined in the proposed POMDP framework.
Subsequently, we evaluate the performance of the proposed DS and FS policies and verify their theoretical analyses.
Finally, we assess the performance of the DS and FS policies in comparison to the introduced baseline policies.
We set $\Omega=5^{-2}$, and $\gamma_{th}=1$ in all simulations. 
Each curve in the following figures is obtained via averaging over $100000$ Monte-Carlo simulation runs.

\subsection{Independence and Evolution of Belief States}
To verify Proposition \ref{prop1}, we simulate a simple network with $N=3, M=2$ following the Monte-Carlo method.
Specifically, we set $\lambda_1=0.5$, $\lambda_2=0.6$, $\lambda_3=0.3$, SNR$=12$dB, and $T=5$.
We employ a random policy that uniformly selects an action from $\bm{\mathcal{A}}$ to schedule the devices because the random policy is general and can result in diverse observations.
We implement multiple simulation runs, which generate various network-wide observations, and local ages in slot $T$.
We sample results of local age $\bm{d}_T$ of the simulation runs that have a specific network-wide observation $(\bm{k}_T$, $\bm{m}_T$, $\bm{u}_T)$ to compute the local and network-wide belief states.
Subsequently, for all feasible $\bm{d}$, we can compare $\prod^3_{i=1} b_{T,i}(d_i)$ and $\bm{b}_T(\bm{d})$ to verify Proposition \ref{prop1} in the case of observation $(\bm{k}_T$, $\bm{m}_T$, $\bm{u}_T)$.
Our simulation results indicate that Proposition \ref{prop1} holds for all network-wide observations.
Here, we randomly choose a case of $(\bm{k}_T$, $\bm{m}_T$, $\bm{u}_T)$ and three sets of $\bm{d}_T$ to demonstrate the corresponding simulation results.
Specifically, $(k_{T,1},m_{T,1},u_{T,1})=(1,1,3)$, $(k_{T,2},m_{T,2},u_{T,2})=(2,2,3)$, $(k_{T,3},m_{T,3},u_{T,3})=(3,3,2)$, and $\bm{d}_T\in\{[1,1,1],[1,2,3],[3,2,1]\}$.
The sampled simulation runs are $80000$ rounds.
Table \ref{pro1Table1} shows the simulation results of $b_{T,i}(d_{T,i})$ of the three devices, their products, and associated network-wide belief states $b_T(\bm{d}_T)$.
The products of the probabilities of the device states match the associated joint probability of the network state
well in all three cases, which validates Proposition \ref{prop1}.

\begin {table}[t]
{\caption{Belief Probabilities of Device States and Network States}\label{pro1Table1}
\begin{center}
\begin{tabular}{||c|c||}
 \hline
Belief probabilities& Simulation results\\
assigned to states of& \\
devices or the network& \\ [0.5ex] 
 \hline\hline
 $b_{T,1}(1)$ & $0.499887$\\
 $b_{T,2}(1)$ & $0.600082$\\
 $b_{T,3}(1)$ & $0.299887$\\
 $\prod^3_{i=1}b_{T,i}(d_{T,i})$ & $\bm{0.089958}$\\
 $b_T([1,1,1])$ & $\bm{0.091790}$\\
 \hline
 $b_{T,1}(1)$ & $0.500126$\\
 $b_{T,2}(2)$ & $0.239776$\\
 $b_{T,3}(3)$ & $0.288228$\\
 $\prod^3_{i=1}b_{T,i}(d_{T,i})$ & $\bm{0.034563}$\\
 $b_T([1,2,3])$ & $\bm{0.034648}$\\
 \hline
 $b_{T,1}(3)$ & $0.124938$\\
 $b_{T,2}(2)$ & $0.239951$\\
 $b_{T,3}(1)$ & $0.299979$\\
 $\prod^3_{i=1}b_{T,i}(d_{T,i})$ & $\bm{0.008993}$\\
 $b_T([3,2,1])$ & $\bm{0.008846}$\\
 [0.2ex]
 \hline
\end{tabular}
\end{center}}
\end{table}

To verify Proposition \ref{propBS}, the evolutions of local ages of a node with the packet arrival rate $\lambda=0.7$ and $(k,m,u)\in\{(5,1,4),(3,2,3),(8,3,2),(1,4,1)\}$ are simulated by the Monte-Carlo method for $80000$ runs.
The simulation results are compared to the theoretical results, i.e., the distributions of final values of local ages on $1,2,3,4,5$ computed by \eqref{ckmu}.
The belief states, defined in Proposition \ref{propBS}, are denoted by $\bm{\theta}(k,m,u)$.  
We denote the $d$th entry in $\bm{\theta}(k,m,u)$ by $\theta_{k,m,u}(d)$.
In those cases, the values of local ages cannot be beyond $5$, and thus $\theta_{k,m,u}(d)=0,\forall d>5$. 
As such, only the probabilities of values not exceeding $5$ are shown for space saving.
The simulation and theoretical results of those $5$ local age belief states are compared in Table \ref{pro2Table1}, illustrating that simulation results match the theoretical results well, which validates \eqref{ckmu}.
On the other hand, the results in \cite[Table I]{10093917} can be used to validate \eqref{ckm}.
Consequently, the belief simplification proposed in Proposition \ref{propBS} is evident to be substantiated.

\begin {table*}[t]
{\caption{$\bm{\theta}(k,m,u)$: Simulation Results versus Theoretical Results with $\lambda=0.7$.}\label{pro2Table1}
\begin{center}
\begin{tabular}{||c| c c c c c||}
 \hline
 Belief states of the local age & $\theta_{k,m,u}(1)$ & $\theta_{k,m,u}(2)$ & $\theta_{k,m,u}(3)$ & $\theta_{k,m,u}(4)$ & $\theta_{k,m,u}(5)$ \\ [0.5ex] 
 \hline\hline
 $\bm{\theta}(5,1,4)$ (simulation) & $0.70001$ & $0.21015$ & $0.06232$ & $0.01932$ & $0.00818$ \\
 $\bm{\theta}(5,1,4)$ (theoretical) & $0.7$ & $0.21$ & $0.063$ & $0.0189$ & $0.0081$ \\
 \hline
 $\bm{\theta}(3,2,3)$ (simulation) & $0.69937$ & $0.21112$ & $0.06335$ & $0.02013$ & $0.00601$ \\
 $\bm{\theta}(3,2,3)$ (theoretical) & $0.7$ & $0.21$ & $0.063$ & $0.02076$ & $0.00623$ \\
 \hline
 $\bm{\theta}(8,3,2)$ (simulation) & $0.70196$ & $0.20971$ & $0.06313$ & $0.01922$ & $0.00595$ \\
 $\bm{\theta}(8,3,2)$ (theoretical) & $0.7$ & $0.21$ & $0.06474$ & $0.01942$ & $0.00582$ \\
 \hline
 $\bm{\theta}(1,4,1)$ (simulation) & $0.69715$ & $0.21337$ & $0.06473$ & $0.01930$ & $0.00542$ \\
 $\bm{\theta}(1,4,1)$ (theoretical) & $0.7$ & $0.21171$ & $0.06351$ & $0.01905$ & $0.00571$ \\[1ex]
 \hline
\end{tabular}
\end{center}}
\end{table*}

\subsection{Effectiveness
of the Action Space Reduction}
\begin{figure}[t]
	\centering
    \includegraphics[width=0.48\textwidth]{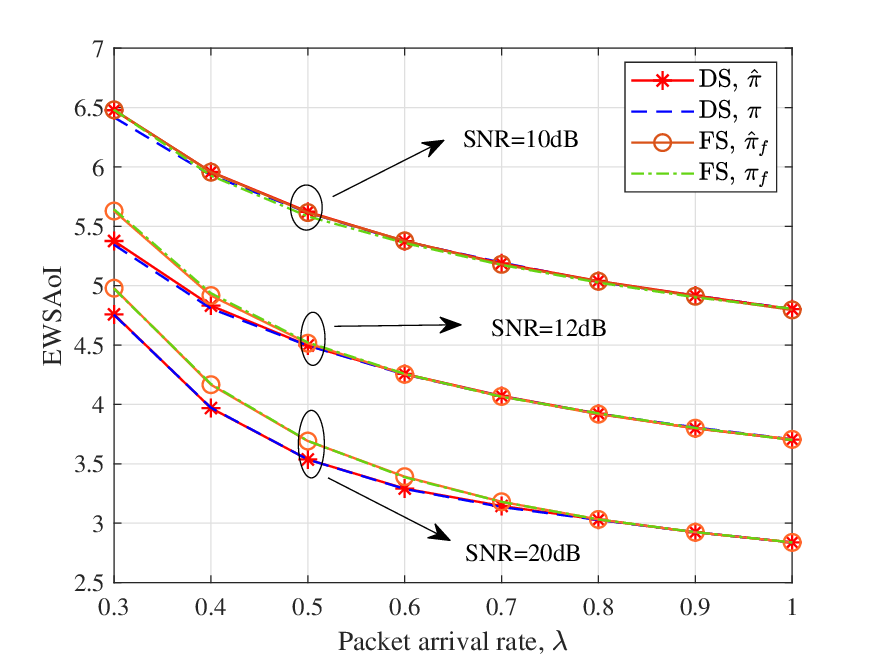}
	\caption{EWSAoI performance versus packet arrival rate $\lambda$ with $N=10$, $M=5$, $\omega_i=1$,  $\forall i$.}
	\label{C_DSvsHe}
\end{figure}

Fig. \ref{C_DSvsHe} shows the EWSAoI performance of DS and FS policies versus the packet arrival rate and the effect of the proposed action space reduction on these policies at different SNRs.
For Fig. \ref{C_DSvsHe}, we set $M=5$,  $N=10$, and $\lambda_i=\lambda$, $\omega_i=1$ for all $i$.
Fig. \ref{C_DSvsHe} demonstrates that all curves decrease as the packet arrival rate $\lambda$ increases, which is intuitive as the EWSAoI decreases when status update packets arrive more frequently.
The EWSAoI performances of the original DS and FS policies are always slightly better than that of the DS and FS policies with the action space reduction, respectively, for the three different SNRs.
This is understandable, as reducing the action space inherently results in a more compact solution space, making the corresponding DS and FS policies not necessarily minimize the Lyapunov drift in each slot.
Moreover, the consistently small impacts derived from the action space reduction on both the DS and FS policies confirm the effectiveness of the proposed action space reduction.
Fig. \ref{C_DSvsHe} also exhibits that the DS policy outperforms the FS policy across all three SNRs. 
The rationale is that the decrease in the Lyapunov drift under the FS policy cannot exceed that under the DS policy.
Nevertheless, 
the small performance gap observed between the DS and FS policies indicates their comparable performance in the symmetric networks.


\subsection{Comparisons with Bounds}
We obtain the EWSAoI performances of the DS and FS policies through Monte-Carlo simulation.
The selection of $\{\beta_i\}^N_{i=1}$ and the upper bound calculation are determined by Theorem \ref{TheoOS} and by solving problem \eqref{op}, for symmetric and asymmetric networks, respectively.
To assess the effectiveness of the policies, we introduce a universal lower bound that applies to any admissible scheduling policies for the scheduling problem \eqref{P} by taking inspiration from \cite[Th.3]{8933047}.
However, we note that the universal lower bound presented in \cite[Th.3]{8933047} cannot be directly applied in our case. 
This is because \cite{8933047} considers a single-scheduling problem, where the transmission success rates of devices are fixed. 
We provide details of the introduced universal lower bound in Appendix \ref{appC}.


\begin{figure}
     \centering
     \begin{subfigure}[b]{0.48\textwidth}
         \centering
         \includegraphics[width=\textwidth]{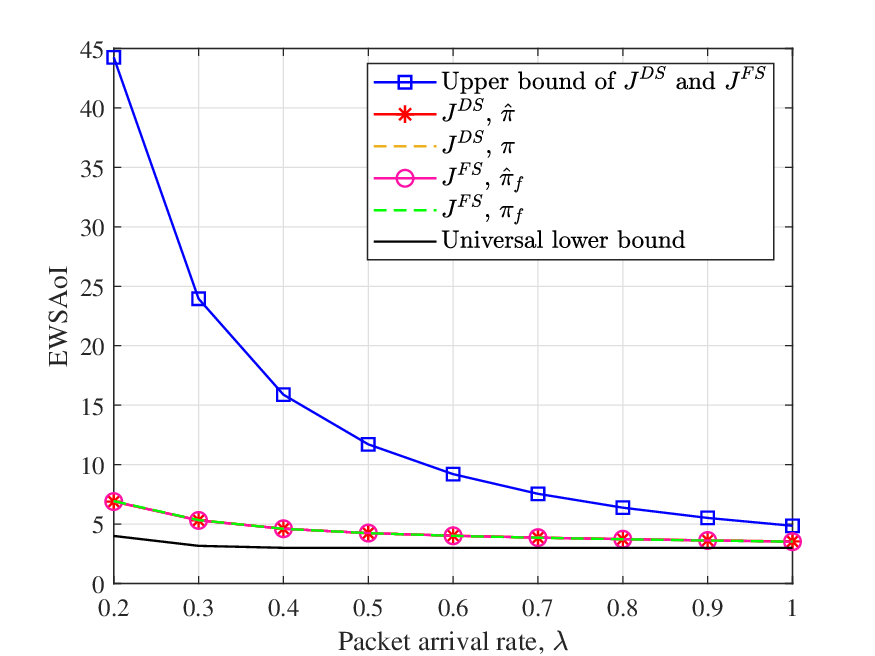}
         \caption{Symmetric network, where $\lambda_i=\lambda,\forall i$.}
         \label{AoI_v_La_Symm}
     \end{subfigure}
     \hfill
     \begin{subfigure}[b]{0.48\textwidth}
         \centering
         \includegraphics[width=\textwidth]{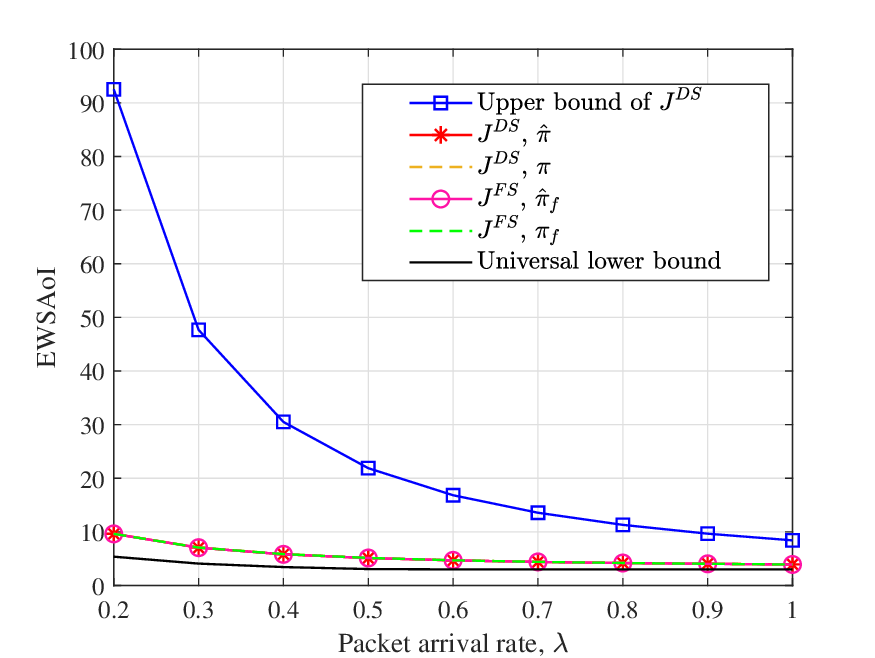}
         \caption{Asymmetric network, where $\lambda_i=\lambda/[1+0.1(i-1)],\forall i$.}
         \label{AoI_v_La_Asymm}
     \end{subfigure}
     \hfill
        \caption{EWSAoI performance versus the packet arrival rate $\lambda$ in two setups, where $N=12$, $M=4$, $\omega_i=1,\forall i$, and $\text{SNR}=20$dB.}
        \label{AoI_v_La}
\end{figure}

In Fig. \ref{AoI_v_La}, we show the EWSAoIs of the original DS and FS policies, their associated proposed minimum upper bound, the EWSAoI of the DS and FS policies with the action space reduction, and the universal lower bound versus the packet arrival rate, under symmetric and asymmetric settings depicted in two subplots respectively.
We set $N=12$, $M=4$, $\omega_i=1,\forall i$, $\text{SNR}=20$dB, and $T=100000$.
In addition, we set $\lambda_i=\lambda,\forall i$ for the symmetric setting, and $\lambda_i=\lambda/[1+0.1(i-1)],\forall i$ for the asymmetric setting.
It is illustrated in Fig. \ref{AoI_v_La} that all curves decrease as $\lambda$ increases.
This is intuitive due to that more frequent status packet arrivals at devices decrease the local ages of the devices, as well as the EWSAoI.
Moreover, the values of the EWSAoI of the original DS and FS policies are smaller than that of their corresponding upper bound and larger than that of the universal lower bound.
Our analysis given in the previous sections can be validated by these shown relationships. 
Furthermore, Fig. \ref{AoI_v_La} demonstrates the effectiveness of the proposed policies, as both the DS and FS policies have a small gap to the universal lower bound, and these gaps decrease as the packet arrival rate increases.
Similar to Fig. \ref{C_DSvsHe}, the EWSAoI of the DS and FS policies applying the action space reduction are slightly larger than that of the original DS and FS policies, respectively, with negligible performance gaps.
This observation further implies that the proposed upper bound of the EWSAoI performance of the original DS and FS policies can be applied to that of the DS and FS policies with the action space reduction.
\begin{figure}[t]
	\centering
    \includegraphics[width=0.48\textwidth]{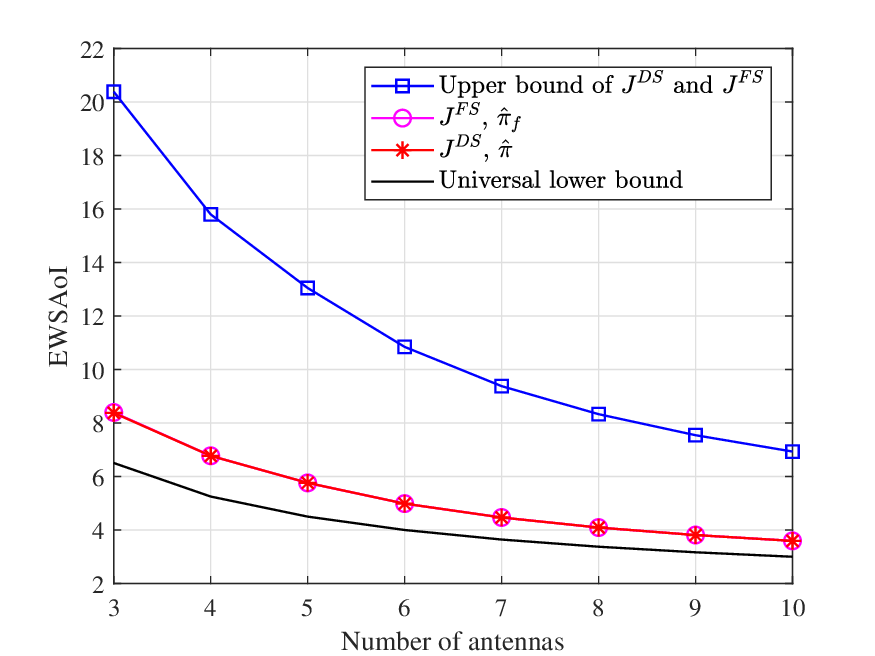}
	\caption{EWSAoI performance versus the number of antennas $M$, where $N=30$, $\lambda_i=0.7$, $\omega_i=1,\forall i$, and $\text{SNR}=20$dB.}
	\label{AoI_v_M}
\end{figure}

In Fig. \ref{AoI_v_M}, we depict the EWSAoI of the DS and FS policies with the action space reduction, their proposed upper bound, and the universal lower bound, with the increasing number of antennas $M$ deployed at the BS.
We set $\lambda_i=0.7$ and $\omega_i=1$ for all nodes, $\text{SNR}=20$dB, $T=100000$, and change the number of antennas from $3$ to $10$.
Fig. \ref{AoI_v_M} shows that all curves decrease as $M$ increases.
The small performance gap observed between the DS and FS policies indicates their comparable performance in the symmetric networks.
Furthermore, in a manner akin to the upper bound, the rate of decrease of the EWSAoI progressively diminishes.
This phenomenon corresponds to Theorem \ref{TheoOS} and \ref{TheoOSF}, indicating that the upper bound is inversely proportional to $n^*p(n^*)$, which increases as $M$ increases when $\text{SNR}=20$dB.

\begin{figure}[t]
	\centering
    \includegraphics[width=0.48\textwidth]{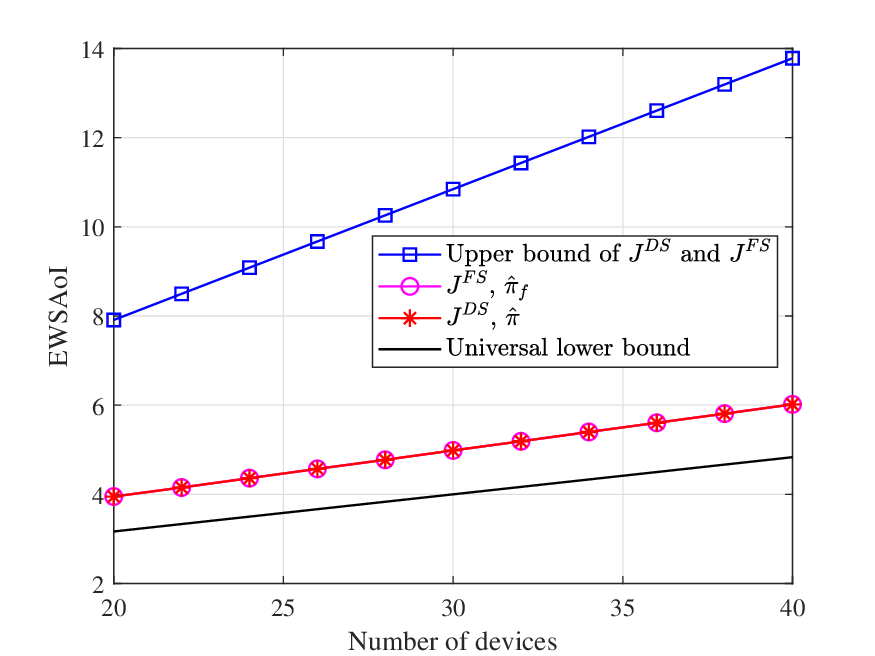}
	\caption{EWSAoI performance versus the number of devices $N$, where $M=6$, $\lambda_i=0.7$, $\omega_i=1,\forall i$, and $\text{SNR}=20$dB.}
	\label{AoI_v_N}
\end{figure}

In Fig. \ref{AoI_v_N}, we depict the EWSAoI of the DS and FS policies with the action space reduction, their proposed upper bound, and the universal lower bound, as the number of device $N$ increases.
$\lambda_i$, $\omega_i,\forall i$, $T$, and SNR are set as same as that in Fig. \ref{AoI_v_M}.
The number of devices is increased from $20$ to $40$.
It is shown in Fig. \ref{AoI_v_N} that all curves increase as the increase of $N$.
This is intuitive because when the number of devices increases, every device in the network has fewer possibilities to be scheduled, and hence its AoI is reduced less frequently.
Moreover, as a curve having a similar trend to that of the upper bound, the EWSAoI exhibits a linear growth pattern with an increasing number of devices.
This phenomenon finds elucidation in Theorem \ref{TheoOS} and \ref{TheoOSF}, showing that the upper bound is proportional to the number of devices in the network.

\subsection{Comparisons with Baseline Policies}
We introduce two baseline policies to show the superiority and adaptability of the proposed DS and FS policies.
The baseline policies are given as follows:
\begin{itemize}
\item \textit{Fixed-Scheduling-$K$ (FS-$K$) policy:}
    We introduce an FS-$K$ policy as the baseline scheme, which always schedules a fixed number of devices. Specifically, FS-$K$ policy selects the $K$-element subset of $\{1,2,\cdots,N\}$, whose associated Lyapunov drift is smaller than that of other $K$-element subsets.
    Note that the FS-$K$ policy and the proposed FS policy are not equivalent because $n^*$ in the proposed FS policy varies in different network scenarios.
    The action space reduction is also adopted by the FS-$K$ policy in the following plots.
    Moreover, FS-$1$ is equivalent to the partially observable max-weight (POMW) policy proposed in our previous work \cite{10093917}. 
    
    \item \textit{Max weighted AoI (MWA) policy:} Let $\widetilde{K}_t(K)=\{y(1),\cdots,y(K)\}$ denote the collection of the $K$ largest values in $\{\omega_iD_{t,i}\}^N_{i=1}$, where $y(j)$ denotes the index of a device with its $\omega_{y(j)}D_{t,y(j)}$ being the $j$th largest value in $\{\omega_iD_{t,i}\}^N_{i=1}$.
    The MWA policy does not use the belief information of devices' local ages.
    Specifically, in each time slot, the BS always schedules devices in set
    \begin{equation}
        \widetilde{K}^*_t=\arg \max_{\widetilde{K}_t(K),K\le M} p(K)\sum^K_{j=1}\omega_{y(j)}D_{t,y(j)}.
    \end{equation}
\end{itemize}
We set $T=200000$ in the following figures.
\begin{figure}
     \centering
     \begin{subfigure}[b]{0.48\textwidth}
         \centering
         \includegraphics[width=\textwidth]{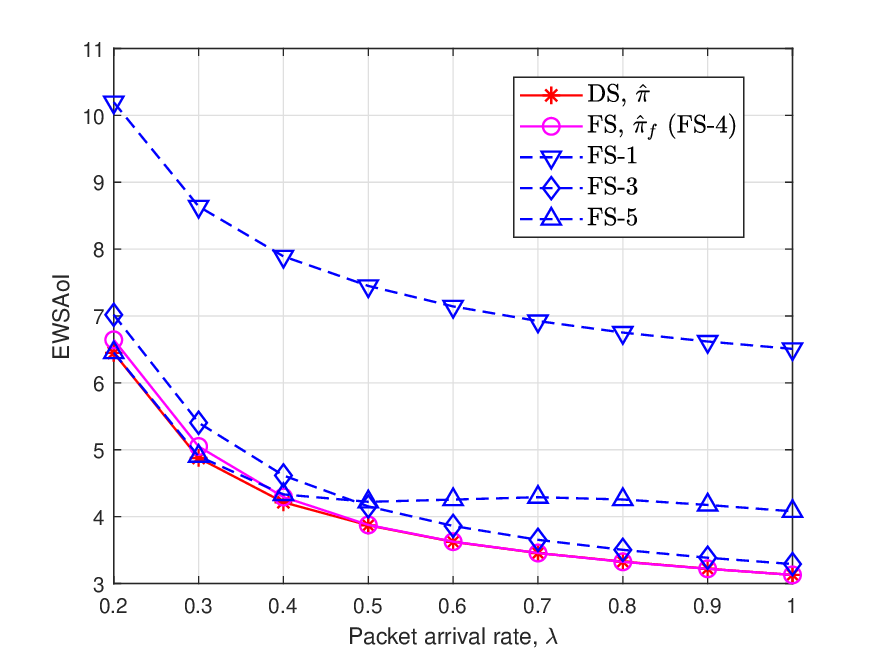}
         \caption{Symmetric network, $\lambda_i=\lambda,\forall i$.}
         \label{Baseline_FS_15}
     \end{subfigure}
     \hfill
     \begin{subfigure}[b]{0.48\textwidth}
         \centering
         \includegraphics[width=\textwidth]{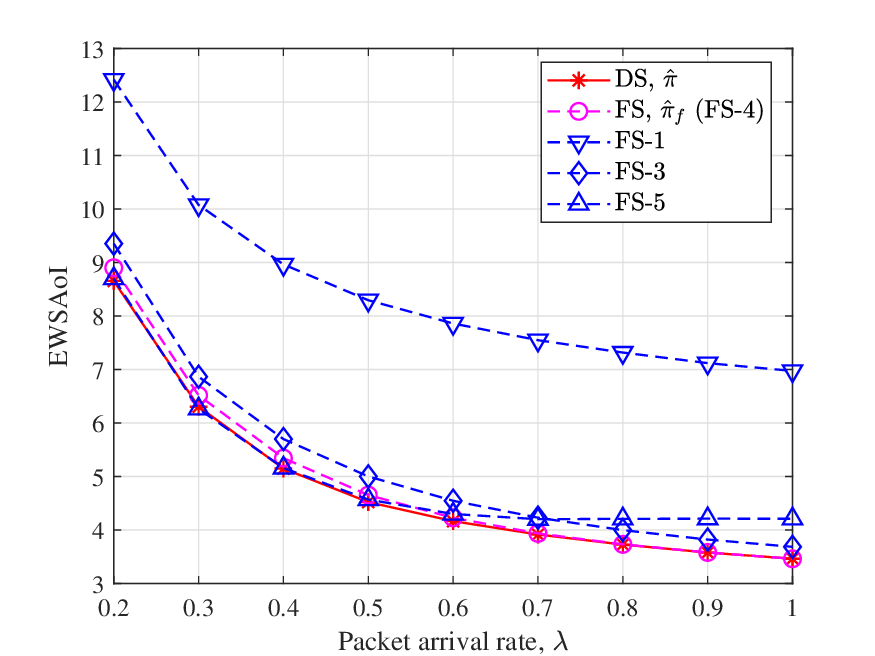}
         \caption{Asymmetric network, $\lambda_i=\lambda/[1+0.1(i-1)],\forall i$.}
         \label{Baseline_FS_20}
     \end{subfigure}
     \hfill
        \caption{EWSAoI performance versus packet arrival rate in two setups, where $N=10$, $M=5$, $\omega_i=1,\forall i$, and SNR=$15$dB}
        \label{Baseline_FS}
\end{figure}

Fig. \ref{Baseline_FS} plots the curves of the EWSAoI of the FS-$K$ policies, the FS policy, and the DS policy versus the packet arrival under symmetric and asymmetric settings with parameters shown at the bottom of two subfigures.
In the following, the FS and DS policies refer to the FS and DS policies adopting the action space reduction unless otherwise specified.
Note that the FS and FS-$4$ policies are equivalent under the settings outlined in Fig. \ref{Baseline_FS}, where $\arg\max_{n}np(n)=4$. 
We can see from Fig. \ref{Baseline_FS} that the proposed DS policy consistently outperforms the baseline FS-$K$ policies. 
We also observe that the curves of the FS-$K$ policies with relatively large $K$ (e.g., FS-$5$ in Fig. \ref{Baseline_FS_15}) initially decrease, then increase, and eventually decrease again as the packet arrival rate $\lambda$ increases. This occurs because, under the FS-$K$ policy, the number of active devices is high when $\lambda$ and $K$ are large, which reduces the successful transmission rate. Additionally, the local ages decrease as $\lambda$ increases, leading to the curves ultimately decreasing.
Further, the performance of FS-$5$ policy approaches that of the DS policy when $\lambda$ is relatively small because the DS policy tends to schedule more devices with a lower packet arrival rate.
Nevertheless, FS-$5$ exhibits inferiority compared to the DS policy as the former enters the subsequent increasing phase.
This highlights the necessity of the proposed DS policy.
Fig. \ref{Baseline_FS} also depicts the significant superiority of the DS policy to the POMW policy, indicating the effectiveness of multiuser scheduling introduced by the MIMO technology. 
Furthermore, while the performance of the FS policy does not consistently approach that of the DS policy, the gap between them is not significant.
Moreover, the FS policy outperforms most FS-$K$ policies across $\lambda$.
These observations show the efficacy and flexibility of the FS policy in comparison to the FS-$K$ policies in both symmetric and asymmetric networks.

\begin{figure}
     \centering
     \begin{subfigure}[b]{0.48\textwidth}
         \centering
         \includegraphics[width=\textwidth]{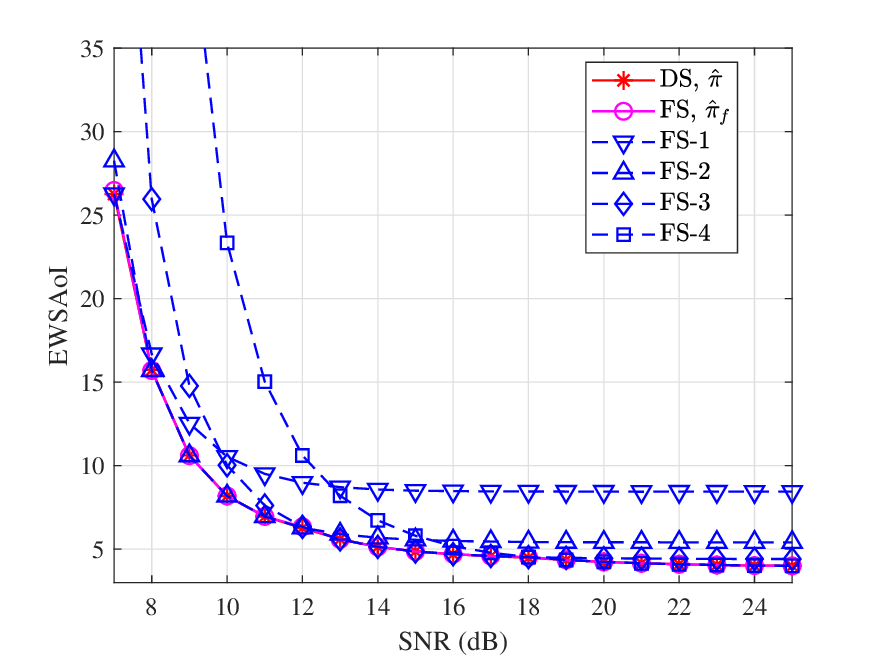}
         \caption{Symmetric network, where $\lambda_i=0.5,\omega_i=1,\forall i$.}
         \label{Baseline_FS_SNR_Sym}
     \end{subfigure}
     \hfill
     \begin{subfigure}[b]{0.48\textwidth}
         \centering
         \includegraphics[width=\textwidth]{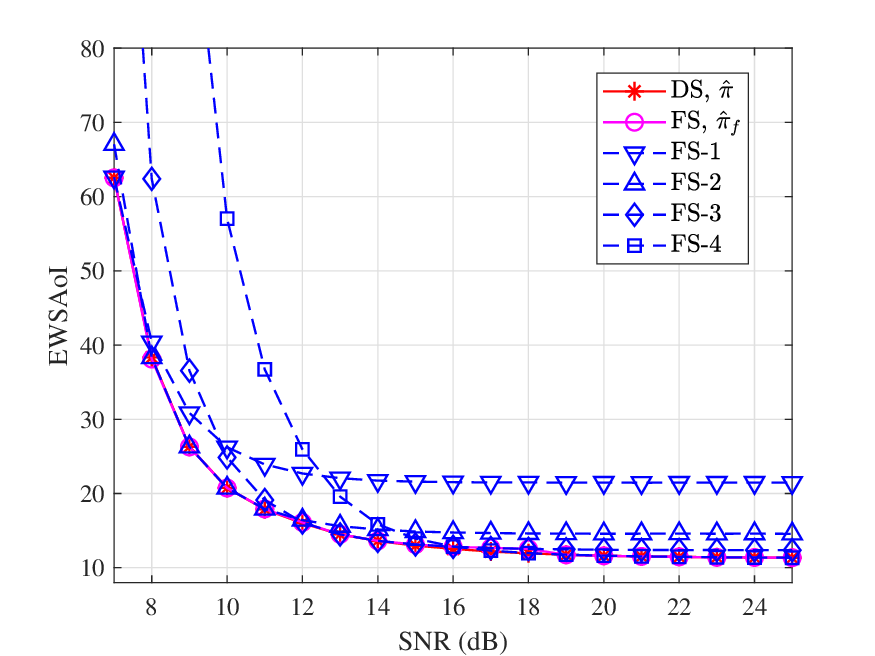}
         \caption{Asymmetric network, where $\lambda_i=0.5/[1+0.1(i-1)],\forall i$, $\omega_i=4$ for $i\in\{4,5,\cdots,9\}$, and $\omega_i=1$ for the remaining $i$.}
         \label{Baseline_FS_SNR_Asym}
     \end{subfigure}
     \hfill
        \caption{EWSAoI performance versus SNR in two setups, where $N=12$, $M=4$.}
        \label{Baseline_FS_SNR}
\end{figure}

In Fig. \ref{Baseline_FS_SNR}, the EWSAoI of the FS-$K$ policies, the FS policy, and the DS policy are plotted against SNR across symmetric and asymmetric settings with the corresponding parameters provided at the bottom of two subfigures.
We can observe that all curves decrease as SNR increases.
This is because the higher transmission success rate $p$ is associated with higher SNR, thereby making destination AoI more prone to decrease in each time slot.
Additionally, Fig. \ref{Baseline_FS_SNR} illustrates that the performances of FS-$K$ policies with relatively small $K$ (e.g., FS-$1$ and FS-$2$) are comparable to that of the DS policy in situations where SNR is relatively low.  
However, these policies progressively achieve inferior performance compared to the DS policy with the increase in SNR.
In contrast, the performances FS-$K$ policies with relatively large $K$ (such as FS-$3$ and FS-$4$) are initially inferior to that of the DS policy, and then gradually approach it as the SNR increases.
These behaviors are understandable since given packet arrival rates, the DS policy tends to schedule relatively fewer devices when the SNR is low and select more devices when the SNR is high, to minimize the Lyapunov drift in each slot.
Moreover, across the SNR range in Fig. \ref{Baseline_FS_SNR}, the consistently marginal performance gaps between the FS and DS policies again imply that the FS policy demonstrates greater effectiveness and adaptability compared to the baseline FS-$K$ policies.

\begin{figure}
     \centering
     \begin{subfigure}[b]{0.48\textwidth}
         \centering
         \includegraphics[width=\textwidth]{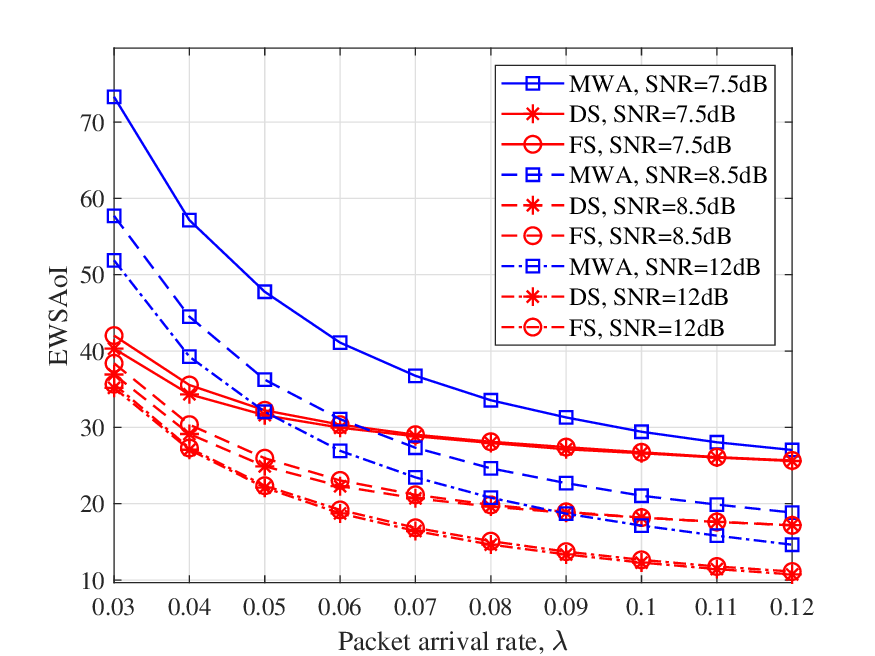}
         \caption{Symmetric network, $\lambda_i=\lambda,\forall i$.}
         \label{Baseline_FS_MWA_Sym}
     \end{subfigure}
     \hfill
     \begin{subfigure}[b]{0.48\textwidth}
         \centering
         \includegraphics[width=\textwidth]{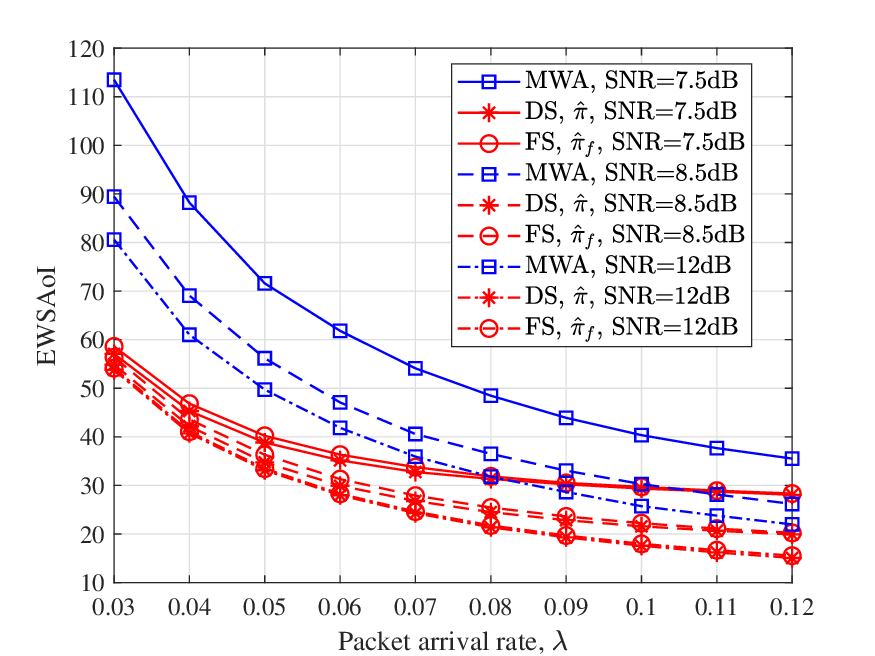}
         \caption{Asymmetric network, $\lambda_i=\lambda/[1+0.1(i-1)],\forall i$.}
         \label{Baseline_FS_MWA_Asym}
     \end{subfigure}
     \hfill
        \caption{EWSAoI performance versus packet arrival rate $\lambda$, where $N=12$, $M=4$.}
        \label{Baseline_FS_MWA}
\end{figure}

Fig. \ref{Baseline_FS_MWA} demonstrates the EWSAoI performance of the proposed DS and FS policies, and the baseline MWA policy with increasing packet arrival rate, in symmetric and asymmetric networks.
The parameters of the settings are shown at the bottom of two subplots.
We set $N=12$, $M=4$ with three different sets of SNRs being $7.5$dB, $8.5$dB, and $12$dB.
It is shown in Fig. \ref{Baseline_FS_MWA} that the DS and FS policies outperform the MWA policies across all three SNRs.
This is due to the fact that the MWA policy makes scheduling decisions with only the observations of the destination AoI, while the DS and FS policies use the observations of both
the destination AoI and local age.
It is noteworthy that Figs. \ref{AoI_v_La_Asymm}, \ref{Baseline_FS_20}, \ref{Baseline_FS_SNR_Asym}, and \ref{Baseline_FS_MWA_Asym} depict narrow performance gaps between the DS and FS policies in asymmetric scenarios. 
These observations hint at the promising prospect of applying the FS policy in asymmetric networks.

\section{Conclusions}\label{Conclude}
This paper addressed the age of information-based scheduling problem for a wireless multiuser MIMO uplink network with stochastic arrivals. 
We formulated the scheduling problem as a partially observable Markov decision process (POMDP), considering partial observations of the local ages of end devices. 
We first proved that the belief states and belief updates of each device are independent of those of other devices, then simplified the belief states using three-dimensional vectors. This simplification significantly reduced the complexity of belief updates. 
We then developed a dynamic scheduling (DS) policy,  inspired by Lyapunov Optimization. 
We derived an upper bound for the AoI performance of the DS policy based on the proposed belief state simplification.
We further minimized the derived upper bound to guide the configuration of the weights applied in the DS policy.
To this end, we formulated a corresponding optimization problem and proved its convexity.
In the context of symmetric networks, we attained the closed-form solution for the optimization problem, thereby yielding more design insights. 
Leveraging this groundwork, we put forth a low-complexity fixed scheduling (FS) policy.
Moreover, we proposed an action space reduction for both the DS and FS policies to further lower computational complexity.
Simulation results validated our analyses and demonstrated that the performances of the DS and FS policies with the action space reduction are close to that of the original DS and FS policy, respectively, and the FS policy performs comparably to the DS policy.
Furthermore, the DS policy was shown to outperform the baseline policies.

\ifCLASSOPTIONcompsoc
 \section*{Acknowledgments}
\else
 \section*{Acknowledgment}
\fi

The authors thank Zhaorui Wang for his useful discussions on the network model.

\bibliographystyle{IEEEtran}
\bibliography{Ref}

\appendices

\section{Proof of Proposition \ref{prop1}}\label{appA}

To prove \textit{Proposition} \ref{prop1}, we introduce \textit{Lemma} \ref{L1}.
To that end, we first define a vector of variables $\bm{x}=[x_1,x_2,\cdots,x_W]$, where $x_i\in \bm{\mathcal{X}}_i$ are discrete variables and $W\in \mathbb{Z}^+$.
Denoted by $\bm{\mathcal{X}}\triangleq \times^W_{i=1}\bm{\mathcal{X}}_i$ the space of $\bm{x}$.
Moreover, define $F_i(x_i|\bm{x}_{-i})$ as a function of $x_i$ given $\bm{x}_{-i}$, where $\bm{x}_{-i}\triangleq[x_{1},\cdots, x_W]\backslash x_{i}$, and $\bm{\mathcal{X}}_{-i}$ is the space of $\bm{x}_{-i}$.
Then, we give \textit{Lemma} \ref{L1} shown as follow:
\begin{lemma}\label{L1}
If we have
\begin{equation}\label{FLC}
    F_i(x_i|\bm{x}_{-i})=F_i(x_i|\bm{x}'_{-i}),\ \forall \bm{x}_{-i}\neq \bm{x}'_{-i},
\end{equation}
then
\begin{equation}\label{FL}
    \prod^W_{i=1}\sum_{\bm{x}\in\bm{\mathcal{X}}}F_i(x_i|\bm{x}_{-i})=C(\bm{\mathcal{X}})\sum_{\bm{x}\in\bm{\mathcal{X}}}\prod^W_{i=1}F_i(x_i|\bm{x}_{-i}),
\end{equation}
where $C(\bm{\mathcal{X}})\triangleq \prod^W_{i=1}|\bm{\mathcal{X}}_{-i}|$.

Note that \eqref{FLC} does not mean that $x_i$ is independent of $\bm{x}_{-i}$, because different $\bm{\mathcal{X}}_{-i}$ may effect the output of $F_i$.
\end{lemma}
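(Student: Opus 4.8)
The plan is to exploit the hypothesis \eqref{FLC}, which asserts that $F_i(x_i|\bm{x}_{-i})$ takes the same value for every $\bm{x}_{-i}\in\bm{\mathcal{X}}_{-i}$. Hence, although $F_i$ is written as a conditional quantity, its numerical value depends on $x_i$ alone, and I would introduce a shorthand $\tilde{F}_i(x_i)$ for this common value so that $F_i(x_i|\bm{x}_{-i})=\tilde{F}_i(x_i)$ for all $\bm{x}_{-i}$. The strategy is then to reduce both sides of \eqref{FL} to the same expression, namely $C(\bm{\mathcal{X}})\prod_{i=1}^W\sum_{x_i\in\bm{\mathcal{X}}_i}\tilde{F}_i(x_i)$.

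First I would treat the left-hand side. For each fixed $i$, the summand $\tilde{F}_i(x_i)$ does not depend on the coordinates $\bm{x}_{-i}$, so summing over the full product space $\bm{\mathcal{X}}=\bm{\mathcal{X}}_i\times\bm{\mathcal{X}}_{-i}$ contributes one copy of $\tilde{F}_i(x_i)$ for each of the $|\bm{\mathcal{X}}_{-i}|$ configurations of $\bm{x}_{-i}$. This yields $\sum_{\bm{x}\in\bm{\mathcal{X}}}F_i(x_i|\bm{x}_{-i})=|\bm{\mathcal{X}}_{-i}|\sum_{x_i\in\bm{\mathcal{X}}_i}\tilde{F}_i(x_i)$, and taking the product over $i$ factors out $\prod_{i=1}^W|\bm{\mathcal{X}}_{-i}|=C(\bm{\mathcal{X}})$, leaving the left-hand side equal to $C(\bm{\mathcal{X}})\prod_{i=1}^W\sum_{x_i\in\bm{\mathcal{X}}_i}\tilde{F}_i(x_i)$.

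Next I would treat the right-hand side. Under the hypothesis the integrand becomes $\prod_{i=1}^W\tilde{F}_i(x_i)$, a product whose factors each involve a distinct coordinate of $\bm{x}$. The finite generalized distributive law (equivalently, interchange of finite sum and product) then gives $\sum_{\bm{x}\in\bm{\mathcal{X}}}\prod_{i=1}^W\tilde{F}_i(x_i)=\prod_{i=1}^W\sum_{x_i\in\bm{\mathcal{X}}_i}\tilde{F}_i(x_i)$. Multiplying by $C(\bm{\mathcal{X}})$ reproduces precisely the expression obtained for the left-hand side, which establishes \eqref{FL}.

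The argument is essentially a bookkeeping exercise, so there is no deep obstacle; the one point requiring care is the role of the hypothesis \eqref{FLC}. Without it, neither manipulation is valid: the left-hand inner sum would not collapse to a clean multiple of $\sum_{x_i}F_i$, and the right-hand sum of products would not factor, because the dependence of each $F_i$ on $\bm{x}_{-i}$ would couple the coordinates. I would therefore pinpoint in the write-up exactly where \eqref{FLC} is invoked in each of the two reductions, and stress that what is needed is only the numerical constancy of $F_i$ across the points of $\bm{\mathcal{X}}_{-i}$ — which is weaker than genuine probabilistic independence, consistent with the caveat accompanying the lemma and with its use in proving \textit{Proposition} \ref{prop1}.
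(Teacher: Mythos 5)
Your proof is correct and follows essentially the same route as the paper's: both replace $F_i(x_i|\bm{x}_{-i})$ by a function of $x_i$ alone (your $\tilde{F}_i$, the paper's $\mu_i$) and exploit the multiplicity $|\bm{\mathcal{X}}_{-i}|$ to produce the factor $C(\bm{\mathcal{X}})$. Your reduction of both sides to the common form $C(\bm{\mathcal{X}})\prod_{i=1}^{W}\sum_{x_i\in\bm{\mathcal{X}}_i}\tilde{F}_i(x_i)$ via the distributive law is just a cleaner packaging of the paper's cross-term counting argument.
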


\begin{proof}
The expanded expression of the left-hand side (LHS) of \eqref{FL}, i.e., $\prod^W_{i=1}\sum_{\bm{x}\in\bm{\mathcal{X}}}F_i(x_i|\bm{x}_{-i})$, should be a sum of cross terms $\prod^W_{i=1} F_i(\cdot|\cdot)$, where input $\bm{x}$ in each $F_i$ can be different.
Following \eqref{FLC}, we assume 
\begin{equation}\label{A1}
    F_i(x_i|\bm{x}_{-i})=\mu_i(x_i),\ \ \forall \bm{x}_{-i}\in \bm{\mathcal{X}}_{-i}.
\end{equation}
Consider an arbitrary $\bm{x}'=[x'_1,\cdots,x'_W]\in\bm{\mathcal{X}}$.
The number of terms in the form of $F_i(x'_i|\cdot)$ in the sum $\sum_{\bm{x}\in\bm{\mathcal{X}}}F_i(x_i|\bm{x}_{-i})$ should be $|\bm{\mathcal{X}}_{-i}|$, i.e., the amount of all possible $\bm{x}_{-i}$.
As such, the number of cross terms in the form of $\prod^W_{i=1} F_i(x'_i|\cdot)$ in the expanded expression of $\prod^W_{i=1}\sum_{\bm{x}\in\bm{\mathcal{X}}}F_i(x_i|\bm{x}_{-i})$ should be $C(\bm{\mathcal{X}})$.
Moreover, according to \eqref{A1}, we have $\prod^W_{i=1} F_i(x'_i|\cdot)=\prod^W_{i=1}\mu_i(x'_i)$.
These results hold for any $i\in\{1,\cdots,W\}$.
Hence, we have
\begin{equation}
\begin{split}
    \prod^W_{i=1}\sum_{\bm{x}\in\bm{\mathcal{X}}}F_i(x_i|\bm{x}_{-i})&=C(\bm{\mathcal{X}})\sum_{\bm{x}\in\bm{\mathcal{X}}}\prod^W_{i=1}\mu_i(x_i)\\
    &=C(\bm{\mathcal{X}})\sum_{\bm{x}\in\bm{\mathcal{X}}}\prod^W_{i=1}F_i(x_i|\bm{x}_{-i}).
\end{split}
\end{equation}
This completes the proof.

\end{proof}

Based on \textit{Lemma} \ref{L1}, we can easily prove the following corollary:
\begin{corollary}\label{C1}
    When the input $x_i$ of $F_i$ is not conditioned on $\bm{x}_{-i}$, we have
    \begin{equation}   \prod^W_{i=1}\sum_{x_i\in\bm{\mathcal{X}}_i}F_i(x_i)=\sum_{\bm{x}\in\bm{\mathcal{X}}}\prod^W_{i=1}F_i(x_i),
    \end{equation}
\end{corollary}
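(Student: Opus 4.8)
The plan is to reduce Corollary \ref{C1} to \textit{Lemma} \ref{L1} rather than re-deriving the product–sum interchange from scratch. First I would observe that a function $F_i(x_i)$ depending on $x_i$ alone is a special case of $F_i(x_i\mid\bm{x}_{-i})$: since its value never varies with $\bm{x}_{-i}$, the hypothesis \eqref{FLC} of \textit{Lemma} \ref{L1} is satisfied trivially. Hence \textit{Lemma} \ref{L1} applies verbatim and yields
\begin{equation}
\prod^W_{i=1}\sum_{\bm{x}\in\bm{\mathcal{X}}}F_i(x_i)=C(\bm{\mathcal{X}})\sum_{\bm{x}\in\bm{\mathcal{X}}}\prod^W_{i=1}F_i(x_i),
\end{equation}
with $C(\bm{\mathcal{X}})=\prod^W_{i=1}|\bm{\mathcal{X}}_{-i}|$.

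Next I would simplify the left-hand side. Because $F_i(x_i)$ is constant in $\bm{x}_{-i}$, each value of $x_i$ is repeated exactly $|\bm{\mathcal{X}}_{-i}|$ times as $\bm{x}$ ranges over $\bm{\mathcal{X}}$, so that $\sum_{\bm{x}\in\bm{\mathcal{X}}}F_i(x_i)=|\bm{\mathcal{X}}_{-i}|\sum_{x_i\in\bm{\mathcal{X}}_i}F_i(x_i)$. Taking the product over $i$ gives
\begin{equation}
\prod^W_{i=1}\sum_{\bm{x}\in\bm{\mathcal{X}}}F_i(x_i)=C(\bm{\mathcal{X}})\prod^W_{i=1}\sum_{x_i\in\bm{\mathcal{X}}_i}F_i(x_i).
\end{equation}
Substituting this into the previous display and cancelling the common factor $C(\bm{\mathcal{X}})$ — which is strictly positive since each discrete space $\bm{\mathcal{X}}_{-i}$ is nonempty — produces exactly the claimed identity
\begin{equation}
\prod^W_{i=1}\sum_{x_i\in\bm{\mathcal{X}}_i}F_i(x_i)=\sum_{\bm{x}\in\bm{\mathcal{X}}}\prod^W_{i=1}F_i(x_i).
\end{equation}

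There is essentially no hard step here; the corollary is immediate once \textit{Lemma} \ref{L1} is in hand. The only point demanding care is the bookkeeping of the counting multiplicity: the summations in \textit{Lemma} \ref{L1} run over the full joint space $\bm{\mathcal{X}}$, so when the summand collapses to a function of $x_i$ alone one must track the factor $|\bm{\mathcal{X}}_{-i}|$ introduced by the redundant copies, and then verify that these factors assemble precisely into $C(\bm{\mathcal{X}})$ so that the cancellation against the right-hand side is exact.
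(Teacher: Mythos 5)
Your derivation is correct and matches the paper's intent: the paper states the corollary without proof, as an easy consequence of \textit{Lemma}~\ref{L1}, and you supply exactly that reduction, with the multiplicity bookkeeping ($\sum_{\bm{x}\in\bm{\mathcal{X}}}F_i(x_i)=|\bm{\mathcal{X}}_{-i}|\sum_{x_i\in\bm{\mathcal{X}}_i}F_i(x_i)$) handled properly. One caution: cancelling $C(\bm{\mathcal{X}})=\prod^W_{i=1}|\bm{\mathcal{X}}_{-i}|$ requires each $\bm{\mathcal{X}}_{-i}$ to be finite, whereas at the point the corollary is actually invoked (equality (a) in the proof of Proposition~\ref{prop1}, where the inner sum runs over $d_{t+1,i}\in\mathbb{Z}^+$) the spaces are countably infinite; there the identity is better justified directly by distributing the product over the sums (legitimate for nonnegative summands), which also makes the corollary immediate without any appeal to \textit{Lemma}~\ref{L1}.
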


Next, we prove the proposition by the method of induction. 
First, by the given condition, $\bm{b}_1$ satisfies \eqref{pp1}.

Suppose $\bm{b}_t$ satisfies  \eqref{pp1} as well.
We aim to prove that \eqref{pp1} also holds for $\bm{b}_{t+1}$.
Consider the fact that given $\hat{\bm{d}}_t$, $\bm{D}_t$, $\bm{a}_t$ and $\bm{d}_t\in\tilde{\bm{\mathcal{D}}}\triangleq\{\bm{d}_t|\Pr( \hat{\bm{d}}_t|\bm{s}_t,\bm{a}_t)>0\}$,
the observation function satisfies \eqref{ob}.
Moreover, $\tilde{\bm{\mathcal{D}}}\!=\!\times^N_{i=1}\tilde{\bm{\mathcal{D}}}_i$.
Clearly, we have
\begin{equation}
    \begin{split}
         &\Pr(\hat{d}_{t,i}|d_{t,i},\bm{d}_{t,-i},\bm{D}_t,\bm{a}_t )\!=\!\Pr(\hat{d}_{t,i}|d_{t,i},\bm{d}'_{t,-i},\bm{D}_t,\bm{a}_t )\\
         &\forall \bm{d}_{t,-i}\neq \bm{d}'_{t,-i},\ \ \bm{d}_t,\bm{d}'_t\in \tilde{\bm{\mathcal{D}}},
    \end{split}
\end{equation}
where $\bm{d}_{t,-i}$$\triangleq [d_{t,1},\cdots d_{t,N}]\backslash d_{t,i}$.
Thus, $f_i(d_{t+1,i},\bm{s}_t,\hat{d}_{t,i},\bm{a}_t)$ satisfies \eqref{FLC} over $\bm{d}_t\in\tilde{\bm{\mathcal{D}}}$.
Then, we have

\begin{equation}\label{ppd}
    \begin{split}
    &b_{t+1}(\bm{d}_{t+1})\\
    &=\frac{\sum_{\bm{d}_{t}\in\tilde{\bm{\mathcal{D}}}} \prod^N_{i=1}f_i(d_{t+1,i},\bm{s}_t,\hat{d}_{t,i},\bm{a}_t)}{\sum_{\bm{d}_{t+1}\in\bm{\mathcal{D}}}\sum_{\bm{d}_{t}\in\tilde{\bm{\mathcal{D}}}}\prod^N_{i=1}f_i(d_{t+1,i},\bm{s}_t,\hat{d}_{t,i},\bm{a}_t)}\\
    &\overset{(a)}{=}\frac{\sum_{\bm{d}_{t}\in\tilde{\bm{\mathcal{D}}}} \prod^N_{i=1}f_i(d_{t+1,i},\bm{s}_t,\hat{d}_{t,i},\bm{a}_t)}{\sum_{\bm{d}_{t}\in\tilde{\bm{\mathcal{D}}}}\prod^N_{i=1}\sum_{d_{t+1,i}\in\mathbb{Z}^+}f_i(d_{t+1,i},\bm{s}_t,\hat{d}_{t,i},\bm{a}_t)}\\
    &=\frac{C(\tilde{\bm{\mathcal{D}}})\sum_{\bm{d}_{t}\in\tilde{\bm{\mathcal{D}}}} \prod^N_{i=1}f_i(d_{t+1,i},\bm{s}_t,\hat{d}_{t,i},\bm{a}_t)}{C(\tilde{\bm{\mathcal{D}}})\sum_{\bm{d}_{t}\in\tilde{\bm{\mathcal{D}}}}\prod^N_{i=1}\sum_{d_{t+1,i}\in\mathbb{Z}^+}f_i(d_{t+1,i},\bm{s}_t,\hat{d}_{t,i},\bm{a}_t)}\\
    &\overset{(b)}{=}\frac{\prod^N_{i=1}\sum_{\bm{d}_{t}\in\tilde{\bm{\mathcal{D}}}}f_i(d_{t+1,i},\bm{s}_t,\hat{d}_{t,i},\bm{a}_t)}{\prod^N_{i=1}\sum_{d_{t+1,i}\in\mathbb{Z}^+,\bm{d}_{t}\in\tilde{\bm{\mathcal{D}}}}f_i(d_{t+1,i},\bm{s}_t,\hat{d}_{t,i},\bm{a}_t)}\\
    &=\prod^N_{i=1}b_{t+1,i}(d_{t+1,i}),
    \end{split}
    \end{equation}
where $C(\tilde{\bm{\mathcal{D}}})\triangleq \prod^N_{i=1}|\tilde{\bm{\mathcal{D}}}_{-i}|$, equality (a) follows \textit{Corollary} \ref{C1}, and equality (b) follows \textit{Lemma} \ref{L1}.
Thus, \eqref{pp1} holds. Moreover, \eqref{ppd} also yields \eqref{supb}. 
This completes the proof.

\section{Proof of Proposition \ref{propBS}}\label{appB}

The methodology of the proof for \eqref{ckm} and the expression of the corresponding destination AoI is the same as that of the proof of \cite[Proposition 1]{10093917}, and is hence omitted here.
Before proving \eqref{ckmu}, we first show that
\begin{equation}
    \lVert\bm{\theta}(k,m,u)\rVert_1=\sum^u_{i=1}\lambda\gamma^{i-1}+\sum^m_{j=1}\frac{\lambda\gamma^{u+j-1}}{1-\gamma^m}=1.
\end{equation}
Hence, $\bm{\theta}(k,m,u)$ is in the form of a probability distribution.

Now, we prove \eqref{ckmu} and the expression of the corresponding destination AoI in \textit{Proposition} \ref{propBS} by induction.
According to the definition of $\bm{\theta}(k,m,u)$, one device with the local age belief state $\bm{\theta}(k,m,1)$ in the current time slot (assumed to be slot $t+1$) had the local age belief state $\bm{c}(k,m)$ and a failed transmission in slot $t$. 
Furthermore, the destination AoI of this device should be $k+m$ in slot $t$.
By the device activity detection, the BS can observe that in slot $t$, the device was active, i.e., $d_t<k+m$, but cannot know the specific value of $d_t$.
In this context, the posterior probability vector of the local age after obtaining the observation in slot $t$ can be given by
\begin{equation}
    \left[\frac{\lambda}{1-\gamma^m},\frac{\lambda\gamma}{1-\gamma^m},\cdots,\frac{\lambda\gamma^{m-1}}{1-\gamma^m},0,\cdots\right].
\end{equation}
Based on the evolution of the local age given in \eqref{Trans_d}, we have $b_{t+1}(1)=\lambda\sum^m_{i=1}\lambda\gamma^{i-1}/(1-\gamma^m)=\lambda$, and $b_{t+1}(d)=\gamma\frac{\lambda\gamma^{d-2}}{1-\gamma^m}$
for $2\le d\le m+1$.
Thus, we have
\begin{equation}
    \bm{\theta}(k,m,1)\!=\!\left[\lambda,\frac{\lambda\gamma}{1-\gamma^m},\frac{\lambda\gamma^{2}}{1-\gamma^m},\cdots,\frac{\lambda\gamma^{m}}{1-\gamma^m},0,\cdots\right],    
\end{equation}
which satisfies \eqref{ckmu}.
Moreover, $D_{t+1}=k+m+1$ since no packet is received from the device in slot $t$.

Suppose $\bm{\theta}(k,m,u)$ satisfies \eqref{ckmu}, and $D_{t+u}=k+m+u$.
Suppose the device is not scheduled, and the belief state is $\left\langle k+m+u,\bm{\theta}(k,m,u)\right\rangle$ in slot $t+u$.
This is the only case that the device was last observed with its local age equal to $k$, then had not been making a successful transmission in the following $m$ slots, and has not been scheduled in the further following $u+1$ slots. 
As such, we can derive the belief state in slot $t+u+1$ based on $\left\langle k+m+u,\bm{\theta}(k,m,u)\right\rangle$.
Then, on one side, according to \eqref{Trans_d}, we have
\begin{equation}
\begin{split}
    \bm{\theta}(k,m,u+1)&=\left[b_{t+u+1}(1),\lambda\gamma,\lambda\gamma^2,\cdots,\lambda\gamma^{u},\right.\\
    &\left.\frac{\lambda\gamma^{u+1}}{1-\gamma^m},\cdots,\frac{\lambda\gamma^{u+m}}{1-\gamma^m},0,\cdots\right],
\end{split}
\end{equation}
where $b_{t+u+1}(1)=\lambda\lVert\bm{\theta}(k,m,u)\rVert_1=\lambda$.
On the other side, $D_{t+u+1}=k+m+u+1$ since there is no packet received from the device in slot $t+u$.
Thus, the local age belief state and the destination AoI of the device in slot $t+u+1$ still satisfy \textit{Proposition} \ref{propBS}.
The proposition is proved.

\section{Proof of Corollary \ref{CoroBE}}\label{appin_sertB1}
We apply induction to prove the corollary. 
Evidently, $\bm{B}_{1,i}=\bm{\Theta}(1,1,0)\in\tilde{\bm{\Theta}},\forall i$.
In slot $t>1$, assume $\bm{B}_{t,i}=\bm{\theta}(k_{t,i},m_{t,i},u_{t,i})\in\tilde{\bm{\Theta}},\forall i$.
We first consider $u_{t,i}=0$, i.e., the BS has scheduled device $i$, observed its local age being $k_{t,i}$, and then has not been scheduling device $i$ for $m_{t,i}$ slots until slot $t-1$.
Then, we have the following cases:
\begin{itemize}
    \item The BS schedules device $i$ in slot $t$, but the transmission of device $i$ fails.
    Then, the BS can detect the activity of device $i$ in slot $t$, and hence the elapsed time of device $i$ has
    no successful transmission since the last failed transmission is $1$ in slot $t+1$.
    Moreover, $D_{t+1,i}=D_{t,i}+1=k_{t,i}+m_{t,i}+1$.
    Hence, we have $\bm{B}_{t+1,i}=\left\langle k_{t,i}+m_{t,i}+1,\bm{\theta}(k_{t,i},m_{t,i},1)
    \right\rangle\in\tilde{\bm{\Theta}}$.
    \item The BS schedules device $i$ in slot $t$, but device $i$ has no packet in its buffer.
    Then, the BS observes device $i$ is inactive, i.e., $D_{t,i}=d_{t,i}=k_{t,i}+m_{t,i}$, implying that the local age of devices $i$ is observed by the BS in slot $t$, and hence $m_{t+1,i}=1$.
    By \eqref{Du}, we have $D_{t+1,i}=k_{t,i}+m_{t,i}+1$.
    Consequently, $\bm{B}_{t+1,i}=\left\langle k_{t,i}+m_{t,i}+1,\bm{\theta}(k_{t,i}+m_{t,i},1,0)
    \right\rangle\in\tilde{\bm{\Theta}}$.
    \item The BS does not schedule device $i$ in slot $t$.
    As such, the number of slots that device $i$ has not been scheduled since the last observation is $m_{t,i}+1$ in slot $t+1$, and $D_{t+1,i}=k_{t,i}+m_{t,i}+1$. 
    Thus, $\bm{B}_{t+1,i}=\left\langle k_{t,i}+m_{t,i}+1,\bm{\theta}(k_{t,i},m_{t,i}+1,0)
    \right\rangle\in\tilde{\bm{\Theta}}$.
\end{itemize}

Subsequently, we consider $u_{t,i}>0$.
In this context, we have a case that the BS does not have a successful transmission in slot $t$.
Then, the elapsed time of device $i$ has
no successful transmission since the last failed transmission is $u_{t,i}+1$ in slot $t+1$.
And by \eqref{Du}, we have $D_{t+1,i}=k_{t,i}+m_{t,i}+u_{t,i}+1$.
Hence, $\bm{B}_{t+1,i}=\left\langle 
k_{t,i}+m_{t,i}+u_{t,i}+1,\bm{\theta}(k_{t,i},m_{t,i},u_{t,i}+1) \right\rangle\in \tilde{\bm{\Theta}}$.

Furthermore, if the BS schedule device $i$, and the transmission of device $i$ succeeds, the BS observes the local age of device $i$.
By \eqref{ckm} and \eqref{ckmu}, all
possible observations of the local age are given by $\hat{k}_{t,i}\in \{1,2,\cdots,m_{t,i}\}\cup\{k_{t,i}+m_{t,i}\}$ when $u_{t,i}=0$, and $\hat{k}_{t,i}\in \{1,2,\cdots,m_{t,i}+u_{t,i}\}$ when $u_{t,i}>0$.
Then, we have $D_{t+1,i}=\hat{k}_{t,i}+1$.
Thus, $\bm{B}_{t+1,i}=\langle 
\hat{k}_{t,i}+1,\bm{\theta}(\hat{k}_{t,i},1,0) \rangle\in\tilde{\bm{\Theta}}$.

Those cases cover all possible results in slot $t+1$.
Therefore, we have $\bm{B}_{t+1,i}\in\tilde{\bm{\Theta}},\forall i$. 
This completes the proof.

\section{Proof of Theorem \ref{theUB}}\label{ProofUB}

Given the belief state, the DS policy minimizes the Lyapunov drift in each slot.
Consequently, the Lyapunov drift under the DS policy cannot be larger than that under any stationary random scheduling scheme in each slot.
Assigning $\xi(j)$ to the $j$th action $\bm{a}(j)$ as the probability of executing $\bm{a}(j)$, we can easliy find
\begin{equation}\label{LDUB}
    \Delta(t)\le \frac{1}{N}\sum^N_{i=1}\beta_i+\frac{1}{N}\sum^N_{i=1}\psi_i\beta_i\left(\sum_{d\in\mathcal{D}}b_{t,i}(d)d-D_{t,i}\right).
\end{equation}
For $u_{t,i}=0$, by the proof of \cite[Theorem 2]{10093917}, we have
\begin{equation}\label{ub1}
    \sum_{d\in\mathcal{D}}b_{t,i}(d)d-D_{t,i}\le \lambda_i-\lambda_iD_{t,i}.
\end{equation}
For $u_{t,i}>0$, we can obtain
\begin{equation}\label{ub2}
\begin{split}
    \sum_{d\in\mathcal{D}}b_{t,i}(d)d\!-\!D_{t,i}&\!=\!\frac{1}{\lambda_i}\!-\!\frac{m_{t,i}(1-\lambda_i)^{m_{t,i}-u_{t,i}}}{1-(1-\lambda_i)^{m_{t,i}}}\!-\!D_{t,i}\\
    &\le\frac{1}{\lambda_i}-D_{t,i}.
\end{split}
\end{equation}
Substituting \eqref{ub1} and \eqref{ub2} into \eqref{LDUB} yields
\begin{equation}\label{ub}
    \begin{split}
        \Delta(t)
        &\le \frac{1}{N}\sum^N_{i=1}\left(\beta_i+\psi_i\beta_i\max\left\{\lambda_i-\lambda_iD_{t,i},\frac{1}{\lambda_i}-D_{t,i}\right\}\right)\\
        &\le \frac{1}{N}\sum^N_{i=1}\beta_i+\frac{1}{N}\sum^N_{i=1}\psi_i\beta_i\left(\frac{1}{\lambda_i}-\lambda_iD_{t,i}\right).
    \end{split}
\end{equation}
Taking the expectation of both sides of \eqref{ub} with respect to $\bm{B}_t$, taking a sum over $t\in\{1,2,\cdots,T\}$, and then taking the time-average, we have
\begin{equation}\label{sub}
\begin{split}
    &\frac{1}{NT}\sum^N_{i=1}\mathbb{E}[\beta_iD_{t+1,i}]-\frac{1}{NT}\sum^N_{i=1}\mathbb{E}[\beta_iD_{1,i}]\\
    &\le \frac{1}{N}\sum^N_{i=1}\beta_i+\frac{1}{NT}\sum^N_{i=1}\sum^T_{t=1}\psi_i\beta_i\mathbb{E}\left[\frac{1}{\lambda_i}-\lambda_iD_{t,i}\right].
\end{split}
\end{equation}
Rearranging \eqref{sub}, taking the limit as $T\to\infty$, and assigning $\beta_i=\omega_i/\psi_i\lambda_i$ yields
\begin{equation}
    \lim_{T\to\infty}\frac{1}{NT}\sum^N_{i=1}\sum^T_{t=1}\mathbb{E}[\omega_iD_{t,i}]\le \frac{1}{N}\sum^N_{i=1}\frac{\omega_i}{\lambda_i}\left(\frac{1}{\psi_i}+\frac{1}{\lambda_i}\right).
\end{equation}
This completes the proof.
\section{Proof of Theorem \ref{thOPConvex}}\label{opConvexP}
To proceed, for $\bm{\theta}^T_i \bm{\xi}>0$ we define $g_i(\bm{\xi})\triangleq\frac{1}{\bm{\theta}^T_i \bm{\xi}}$, otherwise, we define $g_i(\bm{\xi})=+\infty$. 
Then, the objective function of problem \eqref{op} can be rewritten as $\sum^N_{i=1}\frac{\omega_i}{\lambda_i}g_i(\bm{\xi})$.

Consider $\alpha\in(0,1)$, and $\bm{\xi}_1\neq \bm{\xi}_2$ that belong to the feasible region of the problem, i.e., $\bm{1}^T\bm{\xi}=1$ and $\bm{\xi}\ge 0$.
Clearly, the feasible region is convex, and thus $\alpha \bm{\xi}_1+(1-\alpha)\bm{\xi}_2$ also belongs to the region.
Since $\bm{\theta}_i>\bm{0}$, we have
\begin{equation}
    \alpha(1-\alpha)(\bm{\theta}_i^T\bm{\xi}_1-\bm{\theta}_i^T\bm{\xi}_2)^2\ge 0.
\end{equation}
Adding $(\bm{\theta}_i^T\bm{\xi}_1)(\bm{\theta}_i^T\bm{\xi}_2)$ on both sides, we have
\begin{equation}
    \begin{split}
        &(\bm{\theta}_i^T\bm{\xi}_1)(\bm{\theta}_i^T\bm{\xi}_2)\\
        &\le \alpha(1-\alpha)(\bm{\theta}_i^T\bm{\xi}_1-\bm{\theta}_i^T\bm{\xi}_2)^2+(\bm{\theta}_i^T\bm{\xi}_1)(\bm{\theta}_i^T\bm{\xi}_2)\\
        &=-\alpha^2(\bm{\theta}_i^T\bm{\xi}_1-\bm{\theta}_i^T\bm{\xi}_2)^2+\alpha\bm{\theta}_i^T\bm{\xi}_1(\bm{\theta}_i^T\bm{\xi}_1-\bm{\theta}_i^T\bm{\xi}_2)\\
        &+\alpha\bm{\theta}_i^T\bm{\xi}_2(\bm{\theta}_i^T\bm{\xi}_2-\bm{\theta}_i^T\bm{\xi}_1)+(\bm{\theta}_i^T\bm{\xi}_1)(\bm{\theta}_i^T\bm{\xi}_2)\\
        &=[\alpha\bm{\theta}^T_i(\bm{\xi}_1-\bm{\xi}_2)+\bm{\theta}^T_i\bm{\xi}_2][\alpha\bm{\theta}^T_i(\bm{\xi}_2-\bm{\xi}_1)+\bm{\theta}^T_i\bm{\xi}_1]\\
        &=[\bm{\theta}^T_i(\alpha\bm{\xi}_1+(1-\alpha)\bm{\xi}_2)][\bm{\theta}^T_i(\alpha\bm{\xi}_2+(1-\alpha)\bm{\xi}_1)].
    \end{split}
\end{equation}
After some manipulations, we have
\begin{equation}
\begin{split}
    &g_i(\alpha\bm{\xi}_1+(1-\alpha)\bm{\xi}_2)=\frac{1}{\alpha\bm{\theta}^T_i\bm{\xi}_1+(1-\alpha)\bm{\theta}^T_i\bm{\xi}_2}\\
    &\le \frac{\alpha}{\bm{\theta}^T_i\bm{\xi}_1}+\frac{1-\alpha}{\bm{\theta}^T_i\bm{\xi}_2}=\alpha g(\bm{\xi}_1)+(1-\alpha) g(\bm{\xi}_2).
\end{split} 
\end{equation}
This indicates that $g_i(\bm{\xi})$ is convex over the feasible region.
Since $\omega_i,\lambda_i>0$, the objective function $\sum^N_{i=1}\frac{\omega_i}{\lambda_i}g_i(\bm{\xi})$ is convex over the feasible region, indicating that the optimization problem is a convex problem.
\section{Proof of Theorem \ref{TheoOS}}\label{OSProof}
We apply the Karush–Kuhn–Tucker (KKT) conditions to solve problem \eqref{op}.
Let $\eta$ be the KKT multiplier associated with the relaxation of $\sum^{N_a}_{j=1}\xi(j)=1$ and $\{\zeta_j\}^{N_a}_{j=1}$ be the KKT multipliers associated with relaxation of $\xi(j)\ge 0,j=1,\cdots,N_a$.
Let $\bm{\zeta}=[\zeta_1,\cdots,\zeta_{N_a}]$.
Then, for $\bm{\bm{\theta}^T_i\xi}>0,\forall i$, we define
\begin{equation}\label{Lag}
    \mathcal{L}(\bm{\xi},\bm{\zeta},\eta)=\sum^N_{i=1}\frac{\omega_i}{\lambda_i \bm{\theta}^T_i \bm{\xi}}+\sum^{N_a}_{j=1}(-\zeta_j\xi(j))+\eta(\sum^{N_a}_{j=1}\xi(j)-1),
\end{equation}
otherwise, we define $\mathcal{L}(\xi(j),\zeta_j,\eta)=+\infty$.
It follows that the KKT conditions should be
\begin{itemize}
    \item Stationarity: $\nabla_{\xi(j)}\mathcal{L}(\xi(j),\zeta_j,\eta)=0$;
    \item Complementary Slackness: $-\zeta_j\xi(j)=0$;
    \item Primal Feasibility: $\sum^{N_a}_{j=1}\xi(j)=1$, and $\xi(j)\ge 0,\forall j$;
    \item Dual Feasibility: $\zeta_j\ge 0,\forall j$.
\end{itemize}

Consider a symmetric network with $\lambda_i=\lambda$ and $\omega_i=\omega,\forall i$.
By the inequality of arithmetic and geometric means, the objective function $\sum^N_{i=1}\frac{\omega}{\lambda \bm{\theta}^T_i \bm{\xi}}$ is lower bounded by $N\left(\prod^N_{i=1}\frac{\omega}{\lambda \bm{\theta}^T_i \bm{\xi}}\right)^{\frac{1}{N}}$.
This lower bound can be reached when $\frac{\omega_i}{\lambda_i \bm{\theta}^T_i \bm{\xi}}$ is equal for all $i$, which can be satisfied if $\bm{\xi}\in \widetilde{\bm{\Xi}}$.
Therefore, we intuitively aim to verify whether there exists a solution $\bm{\xi}\in \widetilde{\bm{\Xi}}$ that can be the
optimal solution of problem~\eqref{op}.

Suppose $\bm{\xi}\in \widetilde{\bm{\Xi}}$.
For the symmetric networks, we have
    \begin{equation}\label{symob}   
    \sum^N_{i=1}\frac{\omega}{\lambda \bm{\theta}^T_i \bm{\xi}}=\frac{N\omega}{\lambda}\frac{1}{\sum^M_{n=1}\binom{N-1}{n-1}\xi^{(n)}p(n)}.
    \end{equation}
Considering any $\xi(j)\in\bm{\Xi}_{n''}$, we need to find the expression of $\nabla_{\xi(j)}\mathcal{L}(\bm{\xi},\bm{\zeta},\eta)$.
Without the loss of generality, we assume $\bm{a}(j)$ schedules devices belonging to a certain index set $\bm{\mathcal{N}}$ such that $|\bm{\mathcal{N}}|=n''$. 
Note that only $\bm{\theta}^T_i \bm{\xi}$ with $i\in\bm{\mathcal{N}}$ have $a_i(j)=1$.
As such, we have
\begin{equation}\label{subpd}
    \nabla_{\xi(j)}\sum^N_{i=1}\frac{\omega}{\lambda \bm{\theta}^T_i \bm{\xi}}=-\frac{\omega}{\lambda}\sum_{i\in\bm{\mathcal{N}}}\frac{p(n'')}{(\bm{\theta}^T_i \bm{\xi})^2}.
\end{equation}
Substituting \eqref{symob} into \eqref{Lag} and taking the partial derivative by using \eqref{subpd} together with some manipulation, we arrive at
\begin{equation}
\begin{split}
    &\nabla_{\xi(j)}\mathcal{L}(\bm{\xi},\bm{\zeta},\eta)\Big|_{\xi(j')=\xi^{(n)},\forall \xi(j')\in\bm{\Xi}_n,\forall j',n}\\
    &=-\frac{\omega}{\lambda}\!\frac{n''p(n'')}{\left(\sum^M_{n=1}\binom{N-1}{n-1}\xi^{(n)}p(n)\right)^2}\!-\!\zeta_j\!+\!\eta.
\end{split}
\end{equation}
We first show that $\bm{\xi}$ cannot be optimal if $\xi^{(n^*)}=0$.
Assume $\bm{\xi}$ is optimal, and $\xi^{(n^*)}=0$.
In this context, there exists an $n'\neq n^*$ such that $\xi(j)>0$, and $\zeta_j=0$ for $\xi(j)\in \bm{\Xi}_{n'}$ to satisfy the complementary slackness of the KKT conditions.
Then, we can obtain 
\begin{equation}
    \eta = \frac{\omega}{\lambda}\frac{n'p(n')}{\left(\sum^M_{n=1}\binom{N-1}{n-1}\xi^{(n)}p(n)\right)^2}.
\end{equation}
However, for any $\xi(j)\in\bm{\Xi}_{n^*}$, by the stationarity, we can get
\begin{equation}
\begin{split}
    \eta&=\frac{\omega}{\lambda}\frac{n^*p(n^*)}{\left(\sum^M_{n=1}\binom{N-1}{n-1}\xi^{(n)}p(n)\right)^2}+\zeta_j\\
    &>\frac{\omega}{\lambda}\frac{n'p(n')}{\left(\sum^M_{n=1}\binom{N-1}{n-1}\xi^{(n)}p(n)\right)^2}.
\end{split}
\end{equation}
This is a contradiction.
Thus, if $\bm{\xi}$ is the optimal solution, $\xi^{(n^*)}\neq 0$.

Subsequently, we show that if $\xi^{(n')}>0$ for any $n'\neq n^*$, $\bm{\xi}$ cannot be optimal.
Assume $\bm{\xi}$ is optimal, for $\xi(j)\in \bm{\Xi}_{n^*}$, $\xi(j)>0$, and hence $\zeta_j=0$ to satisfy the complementary slackness.
Then, by the stationarity, we can obtain
\begin{equation}
    \eta = \frac{\omega}{\lambda}\frac{n^*p(n^*)}{\left(\sum^M_{n=1}\binom{N-1}{n-1}\xi^{(n)}p(n)\right)^2}.
\end{equation}
For an $n'\neq n^*$, suppose $\xi^{(n')}>0$, and hence $\zeta_j=0$ when $\xi(j)\in \bm{\Xi}_{n'}$.
Then, by $\nabla_{\xi(j)}\mathcal{L}(\xi(j),\zeta_j,\eta)=0$, we have
\begin{equation}
\begin{split}
    \eta &= \frac{\omega}{\lambda}\frac{n'p(n')}{\left(\sum^M_{n=1}\binom{N-1}{n-1}\xi^{(n)}p(n)\right)^2}\\
    &<\frac{\omega}{\lambda}\frac{n^*p(n^*)}{\left(\sum^M_{n=1}\binom{N-1}{n-1}\xi^{(n)}p(n)\right)^2}.
\end{split}
\end{equation}
There exists a contradiction.
Therefore, if $\bm{\xi}$ is the optimal solution, $\xi^{(n')}$ cannot be larger than $0$ for any $n'\neq n^*$.

Consider $\xi^{(n')}=0$ for any $n'\neq n^*$, and $\xi^{(n^*)}=\binom{N}{n^*}^{-1}>0$.
We can let 
\begin{equation}
    \zeta_j = \frac{\omega}{\lambda}\frac{n^*p(n^*)-n'p(n')}{\left(\sum^M_{n=1}\binom{N-1}{n-1}\xi^{(n)}p(n)\right)^2}>0,
\end{equation}  
where $j$ is any index such that $\xi(j)\in\bm{\Xi}_{n'}$,
and 
\begin{equation}
    \eta = \frac{\omega}{\lambda}\frac{n^*p(n^*)}{\left(\sum^M_{n=1}\binom{N-1}{n-1}\xi^{(n)}p(n)\right)^2}
\end{equation}
to satisfy all KKT conditions of problem \eqref{op}.

As such, if $\bm{\xi}\in\widetilde{\bm{\Xi}}$ such that only $\xi^{(n^*)}$ are greater than $0$, $\bm{\xi}$ can satisfy the KKT conditions.
Since problem \eqref{op} of the symmetric networks is convex, $\bm{\xi}$ is the optimal solution to the problem, and we can obtain the optimal value of problem \eqref{op} by substituting $\xi^{(n^*)}=\binom{N}{n^*}^{-1}$, $\xi^{(n')}=0$, and \eqref{symob} into the objective function.

\section{Proof of Theorem \ref{TheoOSF}}\label{OSFProof}

Without the loss of generality, we suppose $\beta_i=\beta>0,\forall i$.
The FS policy corresponds to stationary random scheduling schemes with $\xi(j)\in [0,1]$ for all $\xi(j)\in \bm{\Xi}_{n^*}$ and $\xi(j)=0$ for all $\xi(j)\notin \bm{\Xi}_{n^*}$.
We denote the collection of all $\bm{\xi}$ in this type as $\widetilde{\bm{\Xi}}^*_f$.

Given the belief state in each slot, the FS policy can make the Lyapunov drift not larger than that under any stationary random scheduling scheme with $\bm{\xi}\in\widetilde{\bm{\Xi}}^*_f$.
By a similar proof sketch as that of Theorem \ref{theUB}, we have
\begin{equation}\label{UBFS}
    \lim_{T\to \infty}\frac{\beta\lambda}{NT}\sum^N_{i=1}\sum^T_{t=1}\mathbb{E}[\psi_i D_{t,i}]\le \beta+\frac{\beta}{\lambda NT}\sum^N_{i=1}\psi_i.
\end{equation}
Consider a $\bm{\xi}\in\widetilde{\bm{\Xi}}^*_f$ with $\xi(j)=\binom{N}{n^*}^{-1}$ for all $\xi(j)\in\bm{\Xi}_{n^*}$.
Substituting this $\bm{\xi}$ into \eqref{UBFS} yields 
\begin{equation}\label{UBFS1}
    \lim_{T\to \infty}\frac{\beta\lambda}{NT}\sum^N_{i=1}\sum^T_{t=1}\mathbb{E}[\frac{n^*p(n^*)D_{t,i}}{N}]\le \beta+\frac{\beta}{\lambda }\frac{n^*p(n^*)}{N}.
\end{equation}
Since the FS policy is applied in symmetric networks, any $\beta>0$ does not affect the implementation of this policy.
Thus, we can obtain \eqref{UBCFFSSym} by setting $\beta=\frac{\omega N}{\lambda n^*p(n^*)}$.

\section{The Universal Lower Bound}\label{appC}
According to \cite[Th.3]{8933047}, the EWSAoI of a multiuser network with the status updates arrivals of sources following i.i.d. Bernoulli processes is lower bounded by 
\begin{equation}\label{LB}
    \frac{1}{2N}\sum^{N}_{i=1}\omega_i\left(\frac{1}{q_i}+3\right),
\end{equation}
where $q_i$ denotes the long-term throughput associated with node $i$.
Specifically, we have $q_i\triangleq \lim_{T\to\infty}S_i(T)/T$, where $S_i(T)$ denotes the total number of status update packets delivered from node $i$ to the BS by slot $T$.
Note that \eqref{LB} is slightly different from that in \cite{8933047} since the local age evolution in the considered system is different from that in \cite{8933047}.

To obtain the ULB of problem \eqref{P}, we first need to determine the feasible region of $q_i$ for the considered network.
Since status updates generated by source $i$ at a rate $\lambda_i$, we have $q_i\le \lambda_i,\forall i$.
Let $r_{t,i}\in\{0,1\}$ be the indicator that takes the value of $1$ when device $i$ transmits a status update packet to the BS successfully in slot $t$, and $0$ otherwise.
Furthermore, define $p_{t,i}\triangleq\Pr\{r_{t,i}=1|I_{t,i}a_{t,i}=1,\bm{s}_t\}$.
Note that $p_{t,i}\in\{p(1),\cdots,p(M)\}$ is variable since the subset of scheduled active devices in different runs may be different.
Then, if $I_{t,i}a_{t,i}=1$, we have $r_{t,i}=1$ with probability $p_{t,i}$.
Otherwise, $r_{t,i}$ cannot be $1$.
Thus,
\begin{equation}\label{Er}
    \mathbb{E}[r_{t,i}]=\mathbb{E}[p_{t,i}\Pr\{I_{t,i}a_{t,i}=1\}].
\end{equation}
Given that $p(K)$ is a strictly decreasing function of $K$, it follows that $p_{t,i}\le p(1)$.
Substituting this into \eqref{Er}, 
\begin{equation}
    \mathbb{E}[r_{t,i}]\le p(1)\mathbb{E}[\Pr\{I_{t,i}a_{t,i}=1\}]=p(1)\mathbb{E}[I_{t,i}a_{t,i}].
\end{equation}
Hence, we can obtain
\begin{equation}
\begin{split}
    q_i&=\lim_{T\to\infty}\frac{\mathbb{E}[S_i(T)]}{T}
    =\lim_{T\to\infty}\frac{\mathbb{E}[\sum^T_{t=1}r_{t,i}]}{T}\\
    &\le \lim_{T\to\infty}p(1)\frac{\sum^T_{t=1}\mathbb{E}[I_{t,i}a_{t,i}]}{T}.
\end{split}
\end{equation}
Recall the scheduling constraint, we have $\sum^N_{i=1}I_{t,i}a_{t,i}\le M$, and thus,
\begin{equation}
    \sum^N_{i}\frac{q_i}{p(1)}\le \lim_{T\to\infty}\frac{\sum^N_{i=1}\sum^T_{t=1}\mathbb{E}[I_{t,i}a_{t,i}]}{T}\le M.
\end{equation}
As such, the universal lower bound (ULB) can be expressed as follows.
\begin{equation}\label{LBE}
\begin{split}
    ULB=\min_{\{q_i\}^N_{i=1}} &\frac{1}{2N}\sum^{N}_{i=1}\omega_i\left(\frac{1}{q_i}+3\right)\\
	\mbox{s.t.,}\quad & \sum^N_{i}\frac{q_i}{p(1)}\le M,\quad
	q_i\le\lambda_i,\forall i.
\end{split}
\end{equation}
The solution $q^*_i$ of problem \eqref{LBE} can be obtained using the same methodology presented in \cite[Appendix B]{8933047}.

\end{document}